\newcommand\anthony[1]{{\color{blue}
[#1 - \textbf{Anthony}]}}
\newenvironment{mybox}{
  \framed{\advance\hsize-\width}%
}{%
  \endframed
}
\newcommand{\OMIT}[1]{}
\newcommand{\N}{\mathbb{N}}
\newcommand{\Z}{\mathbb{Z}}
\newcommand{\Q}{\mathbb{Q}}
\newcommand{\vecU}{\ensuremath{\text{\bfseries u}}}
\newcommand{\vecV}{\ensuremath{\text{\bfseries v}}}
\newcommand{\vecW}{\ensuremath{\text{\bfseries w}}}
\newcommand{\Dom}{\ensuremath{\text{\scshape Dom}}}
\newcommand{\Size}[1]{\ensuremath{\|#1\|}}
\newcommand{\floor}[1]{\ensuremath{\lfloor #1 \rfloor}}
\newcommand{\Ord}{\ensuremath{\text{ord}}}
\newcommand{\sem}[1]{[\![#1 ]\!]}
\newcommand{\defn}[1]{\emph{#1}}
\newcommand{\rightshift}{\ensuremath{\text{\scshape RS}}}
\newcommand{\StirlingFirst}[2]{\ensuremath{\begin{bmatrix}
                                                #1 \\
                                                #2
                                            \end{bmatrix}}}
\newcommand{\Identity}{\ensuremath{\text{Id}}}
\newcommand{\false}{\ensuremath{\text{\scshape false}}}
\newcommand{\NP}{\ensuremath{\text{\scshape NP}}}
\renewcommand{\P}{\ensuremath{\text{\scshape P}}}
\newcommand{\coAM}{\ensuremath{\text{\scshape coAM}}}
\newcommand{\Stab}{\ensuremath{\text{Stab}}}
\newcommand{\permgroup}{\ensuremath{\mathcal{S}}}
\begin{document}

\title{
A linear-time algorithm for the orbit problem over cyclic groups
}


\author{Anthony W. Lin \and
        Sanming Zhou
}


\institute{A.~W. Lin\at
              Yale-NUS College \\
              10 College Ave West, Singapore 138609 \\
              Tel.: +65 6601-3699 \\
              \email{anthony.w.lin@yale-nus.edu.sg}           
           \and
           S. Zhou \at
            School of Mathematics and Statistics \\ 
            University of Melbourne \\
            Parkville, Victoria 3010, Australia.
}

\date{Received: date / Accepted: date}

\maketitle

\begin{abstract}
The orbit problem is at the heart of symmetry reduction methods for model
checking concurrent systems. It asks whether two given
configurations in a concurrent system (represented as finite strings over some
finite alphabet) are in the same orbit with respect to a given finite 
permutation group (represented by their generators) acting on this set of 
configurations by permuting indices. 
It is known that the problem is in general as hard as the graph isomorphism
problem, whose precise complexity (whether it is solvable in polynomial-time)
is a long-standing open problem.
In this paper, we consider the restriction of the orbit problem 
when the permutation group is cyclic (i.e. generated by a single
permutation), 
an important restriction of the problem. It is known that this subproblem
is solvable in polynomial-time. Our main result is a linear-time algorithm for 
this subproblem.

\OMIT{To complement
this result, we delineate the boundary of tractability by showing that 
permutation groups generated by two permutations already suffice to make the orbit
problem as hard as the graph isomorphism problem.
, though polynomial-time 
solvability can be retained for abelian permutation groups with a fixed number of 
generators.}
\keywords{symmetry reductions \and model checking \and cyclic groups
    \and orbits}
\end{abstract}

\section{Introduction}
\label{sec:intro}

Since the inception of 
model checking (cf. \cite{Birth}), a key challenge in verifying
 concurrent systems has always been how to circumvent the state explosion problem,
which is exponential in the number of processes and in the number of finite-domain 
variables. The fundamental algorithmic problem can essentially be construed as a 
reachability problem in an exponentially-sized graph that is succinctly 
represented (e.g. in some concurrent programming language). Among others, 
symmetry reduction \cite{ID96,CJEF96,ES96} has emerged to be an effective technique 
in combatting the state explosion problem. The essence of symmetry reduction
is to identify symmetries in the system 
and avoid exploring states that are ``similar'' (under these symmetries) to 
previously explored states, thereby speeding up model checking.
\OMIT{
In the case of concurrent systems, the types of symmetries that have received a 
lot of attentions include \defn{process symmetries}, which arise due to 
replications of concurrent components. In this paper, we shall consider only
such symmetries. 
}

Every symmetry reduction method has to deal with the following two computationally
difficult problems: (1) 
how to identify symmetries in the given system, and (2) how to check that
two configurations are similar under these symmetries. To simplify our discussion
of Problem 1, we will restrict our discussion to \emph{process symmetries}. 
[Extensions 
    to 
\emph{data symmetries} are possible, e.g., see the recent result of \cite{SunJun13},
which gives a general reduction of process and data symmetry identification in 
concurrent systems to symmetry identification in the solutions to constraints in 
the constraint-satisfaction problem.] In this case, 
for concurrent systems with $n$ processes, Problem 1 
amounts to searching for a group $G$ of permutations on $[n] := 
\{1,\ldots,n\}$  such that the system behaves in an identical
way under the action of permuting the indices of the processes by any $\pi
\in G$. For example, for a distributed protocol with a ring topology, the group 
$G$ could be a \emph{rotation group} generated by the ``cyclical right shift'' 
permutation $\rightshift$ that maps $i \mapsto i+1 \mod n$ for each $i\in[n]$. 
See Example \ref{ex:token} for concrete examples.
Although Problem 1 is computationally hard in general, a lot of research advances 
has been made in the past decade (e.g. see the recent survey \cite{WD10}, and also 
the recent paper \cite{SunJun13} for a more general
technique that covers both process and data symmetries).
Now the group 
$G$ partitions the state space
of the concurrent system (i.e. $\Gamma^n$ for some finite set $\Gamma$) into
equivalence classes called \defn{($G$-)orbits}. Problem 2 is 
essentially the \defn{orbit problem (over finite permutation groups)}: 
given $G$ and two configurations $\vecV, \vecW \in \Gamma^n$, determine 
whether $\vecV$ and $\vecW$ are in the same $G$-orbit.
For example, if $G$ is generated by $\rightshift$ with $n = 4$,
the two configurations $(1,1,0,0)$ and $(0,0,1,1)$ are in the same orbit.
These two computational problems can be studied independently. The focus of this 
paper is the second problem, i.e., the orbit problem.

\begin{mybox}
\begin{example}
    \label{ex:token}
    \textbf{Two token-passing protocols with multiple tokens:} These examples are
nondeterministic versions of the randomised self-stablising protocol of Israeli and 
Jalfon \cite{IJ90} (also see \cite{Norman-survey}). 
\medskip

\begin{center}
    \epsfig{file=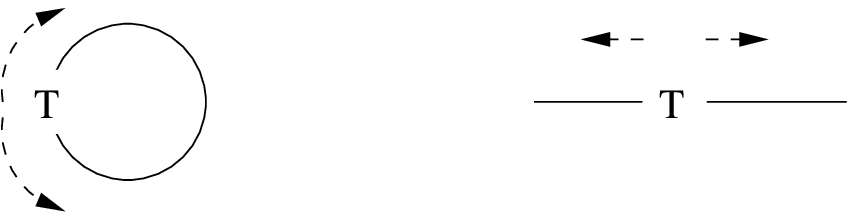}
\end{center}

\smallskip
In the first example (left figure), there are
$n$ processes $P_1,\ldots,P_n$ connected in a ring-shaped topology (i.e. the
neighbours of $P_i$ are $P_{i+1 \mod n}$ and $P_{i-1 \mod n}$). There are $m \leq n$
tokens in the network, each held by a unique process. At any given step, a unique
process $P_i$ holding a token is nondeterministically chosen by a scheduler and is
permitted to pass the token to its neighbour $P_j$ (i.e. either the left 
$P_{i-1 \mod n}$ neighbour or the right neightbour $P_{i+1 \mod n}$). If $P_j$ 
already had a token, it will simply merge the two tokens into one, which reduces
the total number of tokens in the network by 1. For each number $n$ of processes,
this description yields a transition system. For example, configurations of the 
system
are of the form $(\alpha_1,\ldots,\alpha_n) \in \{\bot,\top\}^n$, where $\top$ 
(resp. $\bot$) constitutes that the process holds (resp. does not hold) a token.
The symmetry group $G_n$ of the transition system is the \emph{Dihedral group}
$D_n$, which is generated by the cyclical right shift RS ($i \mapsto i+1 \mod n$)
and the \emph{reflection} ($i \mapsto n-i$, for each $i \in \{1,\ldots,n\}$).
In the standard composition of disjoint cycles notation, these permutations can be 
written
as $(1,2,\ldots,n)$ and $(1,n)(2,n-1)\cdots (\lfloor n/2\rfloor,\lceil n/2\rceil)$,
respectively.

In the second example (right figure above), we modify the first example by 
disconnecting the line
between $P_1$ and $P_n$, which results in a line-shaped topology. In effect, $P_1$
(resp. $P_n$) can only pass a token to $P_2$ (resp. $P_{n-1}$). The symmetry group
$G_n'$ of the system in this case is the group generated by the reflection mapping
that maps $i \mapsto n-i$, for each $i \in \{1,\ldots,n\}$.
\end{example}
\end{mybox}

The orbit problem (OP) was first studied in the 
context of model checking by
Clarke \emph{et al.} \cite{CJEF96} in which it was shown to be in $\NP$ but is
as hard as the graph isomorphism problem, whose precise complexity (whether it 
is solvable in polynomial-time) is a long-standing open problem.
The difficulty of the problem is due to the fact that 
the input group $G$ is represented by a set $S$ of generators and that the size 
of $G$ can be exponential in $|S|$ in the worst case. 
There is also a closely related variant 
of OP called the \defn{constructive orbit problem (COP)}, which
asks to compute the lexicographically smallest element $\vecW \in \Gamma^n$ 
in the orbit of a given configuration $\vecV \in \Gamma^n$ with respect to 
a given group $G$. OP is easily reducible to COP, though the reverse direction 
is by no means clear. COP was initially studied in the context of graph 
canonisation by Babai and Luks \cite{BL83}, in which COP was 
shown to be $\NP$-hard (in contrast, OP is unlikely to be $\NP$-hard since it would
imply the collapse of the polynomial-time hierarchy to its second 
level\footnote{For, if it were NP-hard, then the \emph{coset intersection 
problem (for permutation groups)} would be NP-hard, owing to its 
polynomial-time equivalence to the orbit problem
\cite{CEJS98}. By the well-known results of \cite{BM88,GMW86} (also see
\cite[Section 6.5, Chapter 27]{HandbookCombinatorics}), which essentially shows
that the coset intersection problem is in the complexity class $\coAM$ 
(contained in the second level of the polynomial hierarchy), this would mean
that the polynomial hierarchy collapses to the second level. 
}).
In the context of model checking, COP was first 
studied by Clarke \emph{et al.} \cite{CEJS98}, in which a number of ``easy 
groups'' for which COP becomes solvable in $\P$ are given including 
polynomial-sized groups (e.g. rotation groups), the full symmetry group
$\permgroup_n$ (i.e. containing all permutations on $[n]$), 
and disjoint/wreath products of easy groups (cf. \cite{DM09}).

In this paper, we consider the orbit problem over \defn{cyclic 
groups} (i.e. generated by a single permutation $\pi \in \permgroup_n$), which
is an important subproblem of OP. 
In the case of rotation groups,
one can do a simple enumeration of the group elements and solve the orbit 
problem in polynomial-time. [More precisely, if the group has $m$ elements,
this algorithm runs in time $O(mn)$, which is already quadratic over rotation
groups.] 
However, cyclic subgroups of $\permgroup_n$ can even be of size exponential
in $n$ (see Proposition \ref{prop:exp} below), which rules out this enumeration
strategy. It turns out that the orbit problem over cyclic groups is known to
be solvable in polynomial-time (e.g. see \cite{BL83,Luks93}, where this is
shown for a much larger class of permutation groups denoted as $\Gamma_d$
for every \emph{fixed} $d$, which contains solvable groups).
\OMIT{
It is also possible to
combine cyclic groups with other easy groups from \cite{CEJS98} via 
disjoint/wreath product operators so as to efficiently solve OP for more complex 
permutation groups. 
}
\OMIT{
Secondly, it subsumes a commonly
occurring class of symmetries for concurrent systems: the rotation groups. 
Unlike the case of rotation groups however, the 
size of cyclic groups can be exponential in $n$ (see Proposition \ref{prop:exp} 
below), which rules out a naive enumeration of the group elements. 
}
Another way to see that OP over cyclic groups is solvable in polynomial-time
is by a quadratic-time reduction to 
the classical orbit problem over rational matrices \cite{KL86}: given 
a rational $n$-by-$n$ matrix $M$ and two rational vectors 
$\vecV, \vecW \in \Q^n$, determine if there exists $k \in \N$ such that $M^k 
\vecV = \vecW$. In fact, the two problems coincide when $M$ is restricted to 
\defn{permutation matrices} \cite{Brualdi}, i.e., 0-1 matrices with precisely
one column for each row with entry 1.
\OMIT{
To see this, given a permutation $\pi$ on $[n]$, simply take an 
$n$-by-$n$ 0-1 matrix $A = (A[i,j])_{1\leq i,j \leq n}$ such that 
$A[i,j] = 1$ iff $\pi(j)=i$. The reverse direction is similar.
}
That OP over cyclic groups is in $\P$ follows from Kannan and Lipton's 
celebrated result \cite{KL86} that OP over rational matrices is in $\P$.

\OMIT{
It turns out that the above algorithm for the orbit problem obtained by a 
reduction
to OP over rational matrices runs in time at least $O(n^4)$.
}
Mere polynomial time-complexity is far from sufficient for the purpose of 
symmetry reduction methods, since a model checker will 
have to invoke an algorithm 
for the orbit problem \emph{once each time a new configuration} in the given 
transition system is visited (e.g. see \cite{WD10}). Recent case studies in 
\cite{SunJun13} suggest that the cost of
solving the orbit problem often becomes extremely prohibitive, even more so than 
the cost of computing the symmetries\footnote{Some examples in \cite{SunJun13} (even
with a small number of processes) require a model checker to invoke an algorithm for
the orbit problem hundreds to thousands of times for one transition system.}.
Therefore, lightweight methods for dealing with the orbit problem are 
crucial for the success of symmetry reductions in model checking. 




\OMIT{
Symmetry reduction 

Symmetry extraction and The orbit problem (in general). 
}

\noindent
\textbf{Contributions.} In this paper, we provide an algorithm for
the orbit problem over cyclic groups that runs in linear-time.
To this end, we provide a linear-time reduction to the problem of 
solvability of systems of linear congruence equations. 
The reduction exploits subtle connections to the string searching problem. 

As for the solvability of systems of linear congruence equations, there is
a well-known algorithm (based on the extended Euclidean algorithm)
that runs in linear-time assuming constant-time integer arithmetic operations.
However, when we measure the number of bit operations (i.e. bit complexity 
model), it turns out that the algorithm runs in time 
cubic in the number of equations in the systems. To address this issue, we
restrict the problem to input instances provided by our reduction from the orbit
problem. 
We offer two solutions.
Firstly, we show that the average-case complexity of the algorithm under the bit
complexity model is $O(\log^5 n)$, which is sublinear [Here, $n$ measures the
size of the input to the orbit problem.] Secondly, we provide another algorithm 
that uses at most linearly many bit operations \emph{in the 
worst case} (though on average it is worse than the first algorithm).

It turns out that
permutation groups generated by two permutations already suffice to make the 
orbit
problem as hard as the graph isomorphism problem. This is almost a direct
corollary of the polynomial-time reduction in \cite{CJEF96}
from the graph isomorphism problem to the orbit problem over some group $G$.
It turns out the group $G$ that is produced by the reduction of \cite{CJEF96}
is \emph{not} any arbitrary group and could easily be generated by
two generators (for the same reason that the full symmetry group $\permgroup_n$ 
on $\{1,\ldots,n\}$ can be generated by the permutations $(1,2)$ and 
$(1,2,\ldots,n)$).


\OMIT{
\noindent
\textbf{Related Work.} ...
Furthermore, a recent benchmarking from \cite{SunJun13} shows that generators that
are more complex than rotation groups (i.e. permutations which are products of many 
disjoint cycles) frequently appear in the symmetry group of concurrent systems\anthony{Provide a specific example}.
\smallskip
}

\smallskip
\noindent
\textbf{Organisation.} Section \ref{sec:prelim} contains basic definitions,
notations, and results that will be used throughout the rest of the paper.
We provide our linear-time reduction from the orbit problem to equations solving
in Section \ref{sec:reduce} (Algorithm \ref{algo:main}). Thus far, we assume 
that arithmetic operations take
constant time. We deal with the issue of bit complexity in Section
\ref{sec:bit}. 
We conclude with future work in Section \ref{sec:conc}.
\smallskip

\noindent
\textbf{Acknowledgment.} We thank the anonymous referees of the conference version
for their helpful feedback. Lin was supported by Yale-NUS Startup Grant; part of 
the work was done when Lin was at Oxford supported by EPSRC (H026878). Zhou was 
supported by ARC (FT110100629).

\section{Preliminaries}
\label{sec:prelim}
\noindent
\textbf{General Notations}: We use $\log$ (resp. $\ln$) to denote logarithm 
in base 2 (resp. natural logarithm). We use the standard interval 
notations
to denote a subset of integers within that interval. For example, $[i,j)$
denotes the set $\{ k \in \Z : i \leq k < j \}$. Likewise, for each positive 
integer $n$, we use $[n]$ to denote the set $\{1,\ldots,n\}$. We shall also
extend arithmetic operations to sets of numbers in the usual way: whenever
$S_1, S_2 \subseteq \Z$, we define $S_1 + S_2 := \{ s_1 + s_2 : s_1 \in S_1,
s_2 \in S_2\}$ and $S_1S_2 :=  \{ s_1 \times s_2 : s_1 \in S_1, s_2 \in S_2\}$.
In the context of arithmetic over $2^{\Z}$, we will treat a number $n \in \N$
as the singleton set $\{n\}$. That way, for $a,b \in \N$, the notation $a+b\Z$
refers to the \defn{arithmetic progression} $\{ a + bc : c \in \Z \}$, where
$a$ (resp. $b$) is called the \defn{offset} (resp. \defn{period}) of
the arithmetic progression.
Likewise, for a subset $S \subseteq \N$, we use $\gcd(S)$ to denote the 
greatest common divisor of $S$. 

We will use standard notations from formal language theory. Let $\Gamma$ be
an \defn{alphabet} whose elements are called \defn{letters}.
A word (or a string) $w$ over $\Gamma$ is a finite sequence of elements
from $\Gamma$. We use $\Gamma^*$ to denote the set of all words over $\Gamma$. 
The length of $w$ is denoted by $|w|$. Given a word $w = a_1
\ldots a_n$, the notation $w[i,j]$ denotes the subword $a_i\ldots a_j$. For
a sequence $\sigma = i_1,\ldots,i_k \in [n]^*$ of \emph{distinct} indices of 
$w$, 
we write $w[\sigma]$ to denote the word $a_{i_1}\ldots a_{i_k}$. We also define
$\rightshift(w)$ to be $a_na_1a_2\ldots a_{n-1}$, i.e., the word $w$ cyclically 
right-shifted. 
\smallskip

\noindent
\textbf{Number Theory:} In the sequel, we will use some standard results
in number theory and algorithmic number theory. The first result is 
Linear Congruence Theorem and its application to solving a system of linear
congruences. The second result is Chinese Remainder Theorem.

Linear Congruence Theorem (e.g. see \cite[Chapter 31.4]{Cormen} or 
\cite[Theorem 4.5]{Shoup}) gives a fast method of determining whether an
equation of the form $ax \equiv b\pmod{n}$ is solvable and, whenever
it is solvable, the set of solutions to $x$.
\begin{lemma}[Linear Congruence Theorem]
The equation $ax \equiv b \pmod{n}$ is solvable for the unknown $x$ iff 
$d | b$, where $d = \gcd(a,n)$. Furthermore, if it is solvable,
then the set of solutions equals $x_0 + (n/d)\Z$, for some $x_0 \in [0,n/d)$
    that can be computed in time $O(\log n)$ (assuming constant-time
    arithmetic operations).
\label{lm:linear-congruence}
\end{lemma}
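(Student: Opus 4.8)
The plan is to reduce the congruence to a linear Diophantine equation and then invoke B\'ezout's identity together with the extended Euclidean algorithm. First I would observe that $ax \equiv b \pmod n$ holds for some integer $x$ if and only if there is an integer $y$ with $ax + ny = b$; by B\'ezout's identity the set $\{ax + ny : x,y \in \Z\}$ is exactly $d\Z$, where $d = \gcd(a,n)$, so a solution exists if and only if $d \mid b$. This settles the first claim.

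For the structure of the solution set, assume $d \mid b$. I would run the extended Euclidean algorithm on $(a,n)$ to obtain integers $s,t$ with $as + nt = d$, set $b' = b/d$, and verify that $x_0' := sb'$ is a solution, since $a x_0' = (as)b' \equiv (as+nt)b' = db' = b \pmod n$. To see that every solution differs from $x_0'$ by a multiple of $n/d$, write $a = da'$ and $n = dn'$ with $\gcd(a',n') = 1$; if $ax_1 \equiv ax_2 \pmod n$ then $n \mid a(x_1 - x_2)$, hence $n' \mid a'(x_1 - x_2)$, and coprimality of $a'$ and $n'$ forces $n' \mid (x_1 - x_2)$. The converse implication is immediate, so the solution set is precisely the coset $x_0' + (n/d)\Z$, which equals $x_0 + (n/d)\Z$ for the unique representative $x_0 \in [0, n/d)$ obtained by reducing $x_0'$ modulo $n/d$.

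For the complexity bound I would appeal to the standard analysis of the extended Euclidean algorithm: on inputs bounded by $n$ it terminates after $O(\log n)$ division steps (the Fibonacci/Lam\'e bound), each using a constant number of arithmetic operations, and the remaining work -- forming $b' = b/d$, the product $sb'$, and the reduction modulo $n/d$ -- is a further constant number of arithmetic operations. Under the constant-time arithmetic assumption this yields the claimed $O(\log n)$ running time.

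The only mildly delicate points are the cancellation step, which relies on $\gcd(a',n') = 1$ (an immediate consequence of dividing through by $d$), and the logarithmic iteration count of the extended Euclidean algorithm; both are classical, so I do not anticipate a genuine obstacle. The lemma is essentially a packaging of well-known facts in the precise form -- solution set as an arithmetic progression $x_0 + (n/d)\Z$ with an explicitly bounded offset -- that the rest of the paper will use.
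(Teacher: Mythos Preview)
Your argument is correct and is precisely the standard proof one finds in the textbooks the paper cites. Note, however, that the paper does not actually give its own proof of this lemma: it is stated as a known result with references to \cite[Chapter 31.4]{Cormen} and \cite[Theorem 4.5]{Shoup}, so there is nothing substantive to compare against. Your write-up simply unpacks what those references contain, which is entirely appropriate.
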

An immediate application of Linear Congruence Theorem is to determine the set
of solutions to a \emph{system} of linear congruences. A \emph{system of linear congruence equations} is a relation of the form $\bigwedge_{i=1}^m x \equiv 
a_i \pmod{b_i}$. [In general, a system of linear congruence equations might
    take an equation of the form $ax \equiv b \pmod{n}$, but we do not need
this general form in the sequel.]
The set of solutions $x \in \Z$
to this system is denoted by $\sem{\bigwedge_{i=1}^m x \equiv a_i 
\pmod{b_i}}$, which equals $\bigcap_{i=1}^m \left(a_i + b_i\Z\right)$.
The system is \defn{solvable} if the solution set
is nonempty. We use $\false$ to denote $x \equiv 0\pmod{2} \wedge
x \equiv 1 \pmod{2}$, which is not solvable.
\begin{proposition}
For any solvable system of linear congruence equations 
$\varphi(x) := \bigwedge_{i=1}^m x \equiv a_i \pmod{b_i}$, we have
$\sem{\varphi(x)} = \sem{x \equiv a \pmod{b}}$
for some $a,b \in \Z$. Furthermore, there exists an 
algorithm which computes $a,b$ in linear time (assuming constant-time 
arithmetic operations).
\label{prop:sys-eqs}
\end{proposition}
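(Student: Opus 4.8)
The plan is to prove this by induction on $m$, an induction that doubles as the claimed algorithm. Throughout I maintain a single congruence $x \equiv a \pmod{b}$, initialised to $(a,b) := (a_1,b_1)$, with the invariant that after processing the first $i$ conjuncts one has $\sem{x \equiv a \pmod{b}} = \sem{\bigwedge_{j=1}^{i} x \equiv a_j \pmod{b_j}}$. The base case $i = 1$ is trivial, so everything reduces to a single \emph{fold} step: given a congruence $x \equiv a \pmod{b}$ and a further conjunct $x \equiv a' \pmod{b'}$ for which the conjunction is solvable, produce $a'', b''$ with $\sem{x \equiv a'' \pmod{b''}} = \sem{x \equiv a \pmod{b}} \cap \sem{x \equiv a' \pmod{b'}}$.

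For the fold step I parametrise the solutions of $x \equiv a \pmod{b}$ as $x = a + bt$ with $t$ ranging over $\Z$. Such an $x$ additionally satisfies $x \equiv a' \pmod{b'}$ exactly when $bt \equiv a' - a \pmod{b'}$, which is a single linear congruence in the unknown $t$. Applying Lemma~\ref{lm:linear-congruence} with coefficient $b$, right-hand side $a' - a$, and modulus $b'$, and writing $d := \gcd(b, b')$, this congruence in $t$ is solvable iff $d \mid (a' - a)$ --- which is the case here, as discussed below --- and then its solution set is $t_0 + (b'/d)\Z$ for some $t_0 \in [0, b'/d)$ obtainable in $O(\log b')$ time. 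Back-substituting, $x$ then ranges over $(a + b t_0) + (b b'/d)\Z$, so one takes $a'' := a + b t_0$ and $b'' := b b' / d = \mathrm{lcm}(b, b')$. Iterating the fold over all $m$ conjuncts yields the desired $a, b$, establishing the first assertion.

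Two things remain: that the fold is always applicable, and the running time. For applicability, note that if $\varphi$ is solvable then so is every prefix $\bigwedge_{j=1}^{i} x \equiv a_j \pmod{b_j}$, since deleting conjuncts only enlarges the solution set, so $\emptyset \ne \sem{\varphi(x)} \subseteq \sem{\bigwedge_{j=1}^{i} x \equiv a_j \pmod{b_j}}$; hence the intersection handled at each fold is nonempty, the divisibility hypothesis of Lemma~\ref{lm:linear-congruence} holds at every step, and the algorithm never has to report $\false$. For the running time, the algorithm performs $m - 1$ folds, each consisting of one call to the procedure of Lemma~\ref{lm:linear-congruence} together with a constant number of arithmetic operations (a $\gcd$, a subtraction, a division, a multiplication); charging unit cost to these --- the convention under which, as noted in the introduction, the extended-Euclidean-based method is linear-time --- gives $O(1)$ per fold and $O(m)$ overall, i.e.\ linear in $m$ and hence in the size of $\varphi$.

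I do not anticipate any genuine obstacle: the statement is essentially the Chinese Remainder Theorem for possibly non-coprime moduli, carried out incrementally. The only points that require a little care are the monotonicity observation propagating solvability of $\varphi$ to all of its prefixes (so that Lemma~\ref{lm:linear-congruence} applies at every fold and no fold collapses to $\false$) and the identity $b b' / \gcd(b, b') = \mathrm{lcm}(b, b')$ for the updated period, which drops out directly from the ``$x_0 + (n/d)\Z$'' shape of the solution set in Lemma~\ref{lm:linear-congruence}. The one genuinely delicate point --- that under a bit-complexity model the accumulated modulus $b$ can become exponentially large, so that the naive analysis is no longer linear --- is orthogonal to this statement and is addressed in Section~\ref{sec:bit}.
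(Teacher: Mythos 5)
Your proposal is correct and is essentially the paper's own proof: Algorithm~\ref{algo:sys-eqs} performs exactly the fold you describe, substituting $x = a + by$ into the next conjunct, solving $by \equiv a_i - a \pmod{b_i}$ by Lemma~\ref{lm:linear-congruence}, and updating $(a,b) \leftarrow (a'b + a,\, bb')$, which is your $(a + bt_0,\, \mathrm{lcm}(b,b_i))$. The only cosmetic difference is that the paper initialises with $(a,b) = (0,1)$ and runs $m$ iterations rather than seeding with $(a_1,b_1)$ and running $m-1$.
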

Proposition \ref{prop:sys-eqs} is witnessed by Algorithm \ref{algo:sys-eqs},
which is simply a repeated application of Linear Congruence Theorem.
\begin{algorithm}
\caption{Solving a system of modular arithmetic equations\label{algo:sys-eqs}}
\begin{algorithmic}
\REQUIRE A system of modular arithmetic equations $\bigwedge_{i=1}^m 
    x \equiv a_i \pmod{b_i}$
\ENSURE Solution set $\sem{\bigwedge_{i=1}^m x \equiv a_i\pmod{b_i}}$ as
    $\emptyset$ or an arithmetic progression $a + b\Z$.
\STATE $a := 0$; $b := 1$;
\FOR{$i=1,\ldots,m$}
    \STATE $\varphi(y) := by \equiv a_i - a \pmod{b_i}$;
    \STATE Apply algorithm from Lemma \ref{lm:linear-congruence} on
        $\varphi$ returning either $\emptyset$ or $a'+b'\Z$ for 
        $\sem{\varphi}$;
    \STATE \textbf{if} $\sem{\varphi} = \emptyset$ \textbf{then} 
        \textbf{return} NO \textbf{else} $a := a'b+a$; $b := bb'$ 
            \textbf{end if}
\ENDFOR
\RETURN $a+b\Z$;
\end{algorithmic}
\end{algorithm}


\begin{remark}
The number of bits that is used to maintain $a$ and $b$ in the worst case
is linear in the size $\sum_{j=1}^m (\log a_j + \log b_j)$ of the input. 
This justifies treating a single arithmetic operation as a constant-time
operation. We will discuss bit complexity in Section \ref{sec:bit}.
\end{remark}

In the sequel, we will also use Chinese Remainder Theorem (e.g. see
\cite[Section 31.5]{Cormen} or \cite[Theorem 2.6]{Shoup}).
\begin{proposition}[Chinese Remainder Theorem]
Let $n_1,\ldots,n_k$ be pairwise relatively prime positive integers, and
$n = \prod_{i=1}^k n_i$. The ring $\Z_n$ and the direct product of rings
$\Z_{n_1} \times \cdots \times \Z_{n_k}$ are isomorphic under the function
$\sigma: \Z \to \Z_{n_1} \times \cdots \times \Z_{n_k}$ with
$\sigma(x)  :=  (x \mod n_1,\ldots,x\mod n_k)$ for each $x \in \Z$.
\label{prop:crt}
\end{proposition}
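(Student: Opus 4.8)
The plan is to follow the textbook route: show that the stated map $\sigma$ descends to a ring homomorphism $\bar\sigma : \Z_n \to \Z_{n_1}\times\cdots\times\Z_{n_k}$, prove that $\bar\sigma$ is \emph{injective}, and then upgrade injectivity to bijectivity by a cardinality count, since both sides are finite of size $n = \prod_{i=1}^k n_i$.

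First I would verify that $\sigma$ is well-defined modulo $n$: each $n_i$ divides $n$, so $x \equiv x' \pmod n$ forces $x \equiv x' \pmod{n_i}$ for every $i$, and the tuple of residues is unchanged; thus $\sigma$ factors through $\Z_n$. That $\bar\sigma$ respects $+$ and $\times$ and sends $1 \mapsto (1,\ldots,1)$ is immediate, since each coordinate is just reduction modulo $n_i$, which is itself a ring homomorphism. The one substantive step is injectivity: suppose $\bar\sigma(x) = \bar\sigma(x')$, i.e. $n_i \mid (x-x')$ for all $i$. Because the $n_i$ are pairwise relatively prime, $\mathrm{lcm}(n_1,\ldots,n_k) = \prod_{i=1}^k n_i = n$, hence $n \mid (x-x')$ and $x = x'$ in $\Z_n$. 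The only number-theoretic input here is that $\gcd(a,b)=1$ together with $a\mid c$ and $b\mid c$ implies $ab\mid c$ (immediate from B\'ezout: write $1 = ua+vb$, so $c = uac + vbc$ and $ab$ divides both terms), with the $k$-moduli version following by an easy induction. Finally, an injective map between finite sets of equal cardinality $n$ is a bijection, and a bijective ring homomorphism is a ring isomorphism, which completes the argument.

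I do not expect a genuine obstacle: the whole content is the coprimality-to-product fact used for injectivity, which is routine. The one place where more care would be warranted is if one wants the inverse map explicitly --- which is in fact what the later algorithmic sections will need --- rather than merely its existence. In that case I would instead construct, for each $i$, a coefficient $e_i \in \Z_n$ with $e_i \equiv 1 \pmod{n_i}$ and $e_i \equiv 0 \pmod{n_j}$ for $j \neq i$, namely $e_i = (n/n_i)\cdot\big((n/n_i)^{-1} \bmod n_i\big)$, which is well-defined because $\gcd(n/n_i, n_i) = 1$; then $\bar\sigma^{-1}(a_1,\ldots,a_k) = \sum_{i=1}^k a_i e_i \bmod n$, giving surjectivity directly and, with it, a constructive proof of the isomorphism.
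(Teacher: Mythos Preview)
Your proof is correct and entirely standard. However, the paper does not actually prove this proposition: it is stated as a preliminary fact with citations to Cormen~\emph{et al.}\ and Shoup, and no argument is given. So there is nothing to compare against; your write-up simply supplies a proof where the paper chose to quote one.
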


\noindent
\textbf{Groups:}
We briefly recall basic concepts from group theory and permutation groups
(cf. see \cite{group-book}).
A \defn{group} $G$ is a pair $(S,\cdot)$, where $S$ is a set and $\cdot: (S 
\times S) \to S$ is a binary
operator satisfying: (i) associativity (i.e. $g_1 \cdot (g_2 \cdot g_3) = (g_1 
\cdot g_2) \cdot g_3$), (ii) the existence of a (unique) identity element $e 
\in S$ such that $g \cdot e = e \cdot g = g$ for all $g \in S$, and (iii) 
closure
under inverse (i.e. for each $g \in G$, there exists $g^{-1} \in G$ such that
$g \cdot g^{-1} = g^{-1} \cdot g = e$). When it is clear from the context, we 
will write $g \cdot g'$ as $gg'$. 
The \defn{order} 
$\Ord(G)$ of the group $G$ is defined to be the number $|S|$ of elements in
$G$.  This paper 
concerns only finite groups, i.e., groups $G$ with $\Ord(G) = |S| \in \N$. 
For each $n \in \N$, we define $g^n$ by
induction: (i) $g^0 = e$, and (ii) $g^n = g^{n-1} \cdot g$. 
The
\defn{order} $\Ord(g)$ of $g \in G$ is the least positive integer $n$
such that $g^n = e$. 

A \defn{subgroup} $H$ of $G = (S,\cdot)$ (denoted as $H \leq G$) is any group 
$(S',\cdot_H)$ such that $S' \subseteq S$ and $\cdot_H$ and $\cdot$ agree on 
$S'$. Lagrange's Theorem states that the order $\Ord(H)$ of $H$ divides
the order $\Ord(G)$ of $G$.
Given any subset $X \subseteq S$, the subgroup $\langle X \rangle$ of $G$ 
\defn{generated} by $X$ consists of those
elements of $G$ which can be expressed as a finite product of elements of
$X$ and their inverses. If $H = \langle X \rangle$, then $X$ is said to 
\defn{generate} $H$. A \defn{cyclic group} is a group generated by a singleton 
set $X = \{g\}$. 

An \defn{action} of a group $G = (S,\cdot)$ on a set $Y$ is a function
$\times: S \times Y \to Y$ such that for all $g,h \in S$ and $y \in Y$: (1) 
$(gh)\times y = g\times (h\times y)$, and (2) $e\times y = y$. 
The \defn{stabiliser of $x$ by $G$} is the subgroup 
$\Stab_G(x) := \{ g \in G : g \times x = x \}$ of $G$. If $G$ is understood,
$\Stab(x)$ will be used to denote $\Stab_G(x)$.
The \defn{($G$-)orbit}
containing $y$, denoted $Gy$, is the subset $\{ g\times y: g \in G\}$ of $Y$. 
The 
action $\times$ partitions the set $Y$ into $G$-orbits. When the meaning is 
clear, we shall omit mention of the operator $\times$, e.g, condition (2) above
becomes $ey = y$.
\smallskip


\noindent
\textbf{Permutation Groups.} A \defn{permutation} on $[n]$ is any bijection 
$\pi: [n] \to [n]$. The set of
all permutations on $[n]$ forms the \defn{($n$th) full symmetry group}
$\permgroup_n$ 
under functional composition. We shall use the notation $\Identity$ to
denote the identity element of each $\permgroup_n$. A word $w = a_0\ldots a_{k-1} \in 
[n]^*$ containing 
distinct elements of $[n]$ (i.e. $a_i \neq a_j$ if $i \neq j$) can be used
to denote the permutation that maps $a_i \mapsto a_{i+1 \mod k}$ for
each $i \in [0,k)$ and fixes other elements of $[n]$. In this case, $w$ is
    called a \defn{cycle} (more precisely, $k$-cycle or \emph{transpositions} in
    the case when $k=2$), which we will often write in the 
standard notation $(a_0,\ldots,a_{k-1})$ so as to avoid confusion. 
Observe that $w$ and 
$\rightshift(w)$ represent the same cycle $c$. We will however fix a particular
ordering to represent $c$ (e.g. the word provided as input to the orbit 
problem). For this reason, if $\vecV \in \Gamma^n$ for some alphabet $\Gamma$,
the notation $\vecV[c]$ is well-defined (see General Notations above), which
means projections of $\vecV$ onto elements with indices in $c$, e.g.,
if $\vecV = (1,1,1,0)$ and $c = (1,4,2)$, then $\vecV[c] = (1,0,1)$. Any 
permutation can be written as a composition of disjoint cycles
\cite{group-book}. Each subgroup $G = (S,\cdot)$ of $\permgroup_n$ acts on the set
$\Gamma^n$ (over any finite alphabet $\Gamma$) under the group action of
permuting indices, i.e., for each $\pi \in S$ and $\vecV = (a_1,\ldots,a_n) \in 
\Gamma^n$, we define $\pi\vecV := (a_{\pi^{-1}(1)},\ldots,a_{\pi^{-1}(n)})$. 
\smallskip

\OMIT{
For a cycle, $|c|$ denotes its length and $\Size{c}$ denotes its size.
For example, $c = (12,1,3)$ and $|c| = 3$, while $\Size{c} = 
(\floor{\log(12)} + 1) + (\floor{1} + 1) + (\floor{\log(3)} + 1)
= 4 + 1 + 3 = 8$. 
}






\noindent
\textbf{Complexity Analysis:} We will assume that permutations will be given 
in the input as a composition of disjoint cycles. It is easy to see that 
permutations can be converted back and forth in linear time from such 
representations and the representations of permutations as functions. The
size $\Size{n}$ of a number $n \in \N$ is defined to be the length of the binary
representation of $n$, which is $\floor{\log n}+1$. The size $\Size{c}$ of a 
cycle
$c = (a_1,\ldots,a_k)$ on $[n]$ is defined to be $\sum_{i=1}^k \Size{a_i}$
(in contrast, the length $|c|$ of $c$ is $k$).
For a permutation $\pi = c_1\cdots c_m$ where each $c_i$ is a cycle, the size
$\Size{\pi}$ of $\pi$ is defined to be $\sum_{i=1}^m \Size{c_i}$. We will use 
standard asymptotic notations from analysis of algorithms (big-O and little-o), 
cf.  \cite{Cormen}. We also use the standard $\sim$ notation: $f(n) \sim g(n)$ 
iff $\lim_{n\to\infty} f(n)/g(n) = 1$. We will use the standard RAM model that 
is 
commonly used when analysing the complexity of algorithms (cf. \cite{Cormen}).
In Section \ref{sec:reduce}, we will assume that
integer arithmetic takes constant time. Later in Section \ref{sec:bit},
we will use the \defn{bit complexity model} (cf. \cite{Cormen}), wherein the 
running time is measured in the number of bit operations.

\section{Reducing to solving a system of linear congruence equations}
\label{sec:reduce}
The main result of the paper is:
\begin{theorem}
There is a linear-time algorithm for solving the orbit problem when
the acting group is cyclic.
\label{th:onegen}
\end{theorem}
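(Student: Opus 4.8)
The plan is to reduce the cyclic orbit problem, in linear time, to deciding solvability of a system of linear congruence equations (then invoke Proposition~\ref{prop:sys-eqs}), with the reduction resting on classical linear-time string matching. First I would use the cycle structure of the acting permutation. Write $\pi = c_1\cdots c_m$ as a product of disjoint cycles (this is, up to a linear-time conversion, the input format), where $c_j = (b^j_0,\ldots,b^j_{\ell_j-1})$ has length $\ell_j$ and $\pi(b^j_i) = b^j_{(i+1)\bmod \ell_j}$; the indices occurring in the $c_j$ partition a subset of $[n]$, so $\sum_j \ell_j \le n$. A position fixed by $\pi$ is fixed by every $\pi^k$, so $\pi^k\vecV = \vecW$ requires $\vecV$ and $\vecW$ to agree on all such positions, which a single scan checks. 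For a cycle $c$, a short computation using the definitions of the index action and of $\rightshift$ gives $(\pi\vecV)[c] = \rightshift(\vecV[c])$, hence $(\pi^k\vecV)[c] = \rightshift^k(\vecV[c])$. Therefore $\pi^k\vecV = \vecW$ holds iff $\vecV$ and $\vecW$ agree off the cycles and, for every $j$, $k$ belongs to $R_j := \{\, r \in \Z : \rightshift^r(\vecV[c_j]) = \vecW[c_j] \,\}$. The key structural point is that each nonempty $R_j$ is a \emph{single} arithmetic progression: the set $\{\, s\in\Z : \rightshift^s(\vecV[c_j]) = \vecV[c_j] \,\}$ is a subgroup of $\Z$ containing $\ell_j\Z$, hence equal to $p_j\Z$ for the necklace period $p_j \mid \ell_j$ of $\vecV[c_j]$, and then $R_j = r_j + p_j\Z$ for any single witness $r_j \in R_j$.

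Next I would compute, for each $j$ in time $O(\ell_j)$, whether $R_j = \emptyset$, and otherwise a witness $r_j$ and the period $p_j$; this is where string searching enters. Extracting the factors $\vecV[c_j]$ and $\vecW[c_j]$ from $\vecV,\vecW$ costs $O(\ell_j)$. Now $\rightshift^r(\vecV[c_j]) = \vecW[c_j]$ for some $r$ iff $\vecW[c_j]$ occurs as a length-$\ell_j$ factor of $\vecV[c_j]\vecV[c_j]$; running, e.g., Knuth--Morris--Pratt with pattern $\vecW[c_j]$ on text $\vecV[c_j]\vecV[c_j]$ decides this and reports all occurrences in time $O(\ell_j)$, giving a witness $r_j$ (and the answer NO if there is none). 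The period $p_j$ is obtained either as the common gap between consecutive occurrences, or independently from the failure function of $\vecV[c_j]$: if its smallest linear period $q$ divides $\ell_j$ then $p_j = q$, and otherwise $p_j = \ell_j$ --- the two cases being justified by the Fine--Wilf periodicity theorem. Summing over cycles, this phase costs $\sum_j O(\ell_j) = O(n)$.

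Finally, the answer is YES iff $\vecV,\vecW$ agree off the cycles and the system $\varphi(k) := \bigwedge_{j=1}^m k \equiv r_j \pmod{p_j}$ is solvable; here $R_j = r_j + p_j\Z$ already subsumes reduction modulo $\ell_j$ since $p_j\mid\ell_j$, and since $\pi$ has finite order a solution over $\Z$ (equivalently over $\N$) exists iff $\pi^k\vecV=\vecW$ for some $k$. Solvability of $\varphi$ is decided in linear time by Algorithm~\ref{algo:sys-eqs} (Proposition~\ref{prop:sys-eqs}), whose input has size $\sum_j(\log r_j + \log p_j) \le \sum_j \log \ell_j \le n$. Hence the whole algorithm runs in linear time, under the unit-cost arithmetic assumption of this section; the bit-complexity refinement is deferred to Section~\ref{sec:bit}.

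I expect the main obstacle to be the per-cycle analysis rather than the final congruence solving: one must see that ``$\vecW[c_j]$ is a cyclic rotation of $\vecV[c_j]$'' is exactly a factor search in a doubled string (hence linear-time), and, more importantly, that the set $R_j$ of valid rotations is a single coset $r_j + p_j\Z$ rather than an arbitrary union of progressions --- a consequence of the subgroup structure of rotations together with Fine--Wilf. Without this one would be left intersecting unions of arithmetic progressions, which destroys linearity. A secondary, more routine point is the bookkeeping ensuring that the per-cycle costs (and the size of the congruence system fed to Algorithm~\ref{algo:sys-eqs}) sum to $O(n)$, which holds precisely because the cycles are disjoint.
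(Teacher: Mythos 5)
Your proposal is correct and follows essentially the same route as the paper: decompose $g$ into disjoint cycles, observe via the right-shift identity $(c^r\vecV)[c]=\rightshift^r(\vecV[c])$ that each per-cycle solution set is computable by KMP on the doubled string $\vecV[c]\vecV[c]$ with pattern $\vecW[c]$, show that each nonempty solution set is a \emph{single} arithmetic progression $a_i+b_i\Z$ with $b_i\mid|c_i|$, and then hand the resulting system of congruences to Algorithm~\ref{algo:sys-eqs}. Your justification that the solution set is a single progression (via the subgroup $\{s:\rightshift^s(\vecV[c_j])=\vecV[c_j]\}\le\Z$) is the same argument as the paper's Lemma~\ref{lm:normal}, merely phrased through $\Z$ rather than through the stabiliser inside the cyclic group $\langle c_i\rangle$; the Fine--Wilf remark is a valid but unnecessary extra, since either the gap between the two smallest occurrences or the subgroup argument already yields the period. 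One small point where your write-up is actually more careful than the paper's prose: you explicitly check that $\vecV$ and $\vecW$ agree on positions fixed by $g$ (equivalently, treat fixed points as length-one cycles), a case the paper leaves implicit.
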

In this section, we will prove this theorem \emph{assuming constant-time
arithmetic operations}. In the next section, we will show that this theorem
still holds for the bit complexity model.


Before we proceed to the algorithm,
the following proposition shows why the naive algorithm that checks whether 
$g^i\vecV = \vecW$, for a given permutation $g \in \permgroup_n$ and for each 
$i\in [0,\Ord(g))$, actually runs in exponential time.
\begin{proposition}
There exists a sequence $\{G_i\}_{i=1}^{\infty}$ of cyclic groups $G_i = 
\langle g_i \rangle$ such that $\Ord(g_i)$ is exponential in the size
$\Size{g_i}$ of the permutation $g_i$.
\label{prop:exp}
\end{proposition}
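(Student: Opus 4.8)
The plan is to realise a large order using \emph{many small, pairwise coprime} cycle lengths while keeping the write-down short. Concretely, let $p_1<p_2<\cdots$ enumerate the primes, put $n_i:=p_1+p_2+\cdots+p_i$, and let $g_i\in\permgroup_{n_i}$ be a product of $i$ pairwise disjoint cycles $c_1,\ldots,c_i$ with $|c_j|=p_j$ — for instance let $c_j$ be the cycle on the block of $p_j$ consecutive integers ending at $p_1+\cdots+p_j$. Set $G_i:=\langle g_i\rangle$. This is the whole construction; everything else is estimation.

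First I would compute $\Ord(g_i)$. Since $g_i$ is a product of disjoint cycles, its order is the least common multiple of the cycle lengths, and because $p_1,\ldots,p_i$ are distinct primes this lcm is exactly the primorial $p_1p_2\cdots p_i$. Taking logarithms, $\ln\Ord(g_i)=\sum_{j\le i}\ln p_j=\vartheta(p_i)$, the Chebyshev theta-function, and Chebyshev's theorem gives $\vartheta(p_i)=\Theta(p_i)$; hence $\Ord(g_i)=2^{\Theta(p_i)}$.

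Second I would bound $\Size{g_i}$ from above. Every point moved by $g_i$ lies in $[n_i]$, so it contributes at most $\floor{\log n_i}+1=O(\log n_i)$ bits, and there are exactly $n_i$ moved points; thus $\Size{g_i}=O(n_i\log n_i)$. Now the number of primes $\le p_i$ is $i=\pi(p_i)=\Theta(p_i/\ln p_i)$, each at most $p_i$, so $n_i\le i\,p_i=O(p_i^2/\ln p_i)$, whence $\log n_i=O(\log p_i)$ and $\Size{g_i}=O(p_i^2)$. Combining with the order computation, $\Ord(g_i)=2^{\Theta(p_i)}\ge 2^{\Omega(\sqrt{\Size{g_i}})}$, so $\Ord(g_i)$ grows super-polynomially — exponentially, for all practical purposes — in $\Size{g_i}$, which is exactly what is needed to rule out the naive enumeration of $G_i$.

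The group theory here (order of a permutation $=$ lcm of cycle lengths) is trivial, so the only real point of care — and the step I would call the main obstacle — is the size estimate: the measure $\Size{\cdot}$ charges $\Theta(\log n_i)$ bits per moved point rather than $O(1)$, so one must check that this logarithmic overhead does not swamp the separation. It does not, precisely because $n_i$ is only \emph{polynomially} large in $p_i$; this is where one genuinely invokes prime-counting bounds of Chebyshev strength ($\pi(x)=\Theta(x/\log x)$, equivalently $p_i=\Theta(i\log i)$, and $\vartheta(x)=\Theta(x)$). With those standard facts in hand the rest is routine asymptotics.
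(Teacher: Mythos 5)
Your proof uses exactly the paper's construction --- $g_i$ a product of disjoint cycles of the first $i$ prime lengths, with $\Ord(g_i)$ equal to the primorial $p_1\cdots p_i$ --- and the same prime-counting asymptotics, so this is essentially the paper's argument. You are in fact a bit more careful on the one subtle point: the paper writes $\Size{g_i}=\sum_{j\le i}p_j$, implicitly charging $O(1)$ bits per moved point, whereas you correctly include the $\Theta(\log n_i)$-bit cost per entry and verify that the resulting separation, $\Ord(g_i)=2^{\Theta(\sqrt{\Size{g_i}})}$, is still super-polynomial, which is all that is needed to defeat naive enumeration.
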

\begin{proof}
Let $p_n$ denote the $n$th prime. The \defn{Prime Number Theorem} states that 
$p_n \sim n\log n$
(cf. \cite{HardyWright}). For each $i \in \Z_{> 0}$, we define a cycle 
$c_i$ of length $p_i$ by induction on $i$. For $i = 1$, let $c_1 = (1,2)$. 
Suppose that $c_{i-1} = (j,\ldots,k)$. In this case, we define 
$c_i$ to be the cycle $(k+1,\ldots,k+p_i)$. To define the sequence
$\{g_i\}_{i=1}^{\infty}$ of permutations, simply let $g_i = \Pi_{j=1}^i c_i$.
For example, we have $g_3 = (1,2)(3,4,5)(6,7,8,9,10)$. Since $c_i$'s are 
disjoint, the order $\Ord(g_i)$ of $g_i$ is the smallest positive integer $k$
such that $c_j^k = \Identity$ for all $j \in [i]$. If $S_j$ denotes the set of 
integers $k$ satisfying $c_j^k = \Identity$, then $\Ord(g_i)$ is precisely the smallest
positive integer in the set $\bigcap_{j=1}^i S_j$. It is easy to see that 
$S_j = p_j\Z$, which is the set of solutions to the linear congruence 
equation $x \equiv 0 \pmod{p_j}$. Therefore, by the Chinese Remainder Theorem
(cf. Propositon \ref{prop:crt}), the set $\bigcap_{j=1}^i S_j$ coincides
with the arithmetic progression $t_i\Z$ with $t_i := \prod_{j=1}^i p_j$. 
This implies that $\Ord(g_i) = t_i$. Now the number $t_i$ is also known as the 
\defn{$i$th primorial number} \cite{primorial} with $t_i \sim e^{(1+o(1))i \log 
i}$, which is a corollary of the Prime Number Theorem. On the other hand,
the size of $g_i$ is $\sum(i) := \sum_{j=1}^i p_i$, which is known to be
$\sim \frac{1}{2} i^2 \ln i$ (cf. \cite{BS96}). Therefore, $\Ord(g_i)$ is 
exponential in $\Size{g_i}$ as desired. \qed
\end{proof}

\begin{algorithm}
\caption{Reduction to system of modular arithmetic equations\label{algo:main}}
\begin{algorithmic}
\REQUIRE A permutation $g = c_1\cdots c_m \in \permgroup_n$, a finite alphabet
    $\Gamma$, and $\vecV,\vecW \in \Gamma^n$.
\ENSURE A system of modular arithmetic equations, which is satisfiable
iff $\exists i\in\N: g^i(\vecV) = \vecW$.
\STATE \emph{// First solve for each individual cycle}
\FORALL{$i=1,\ldots,m$} 
    \STATE Compute the length $|c_i|$ of the cycle $c_i$;
    \STATE Compute an ordered list $S_i' \subseteq [0,|c_i|)$ of numbers $r$ 
        with $c_i^r(\vecV[c_i]) = \vecW[c_i]$;
    \STATE \textbf{if} $S_i' = \emptyset$ \textbf{then} \textbf{return} $\false$ \textbf{end if}
    \STATE \textbf{if} $|S_i'| = 1$ \textbf{then} let $a_i$ be the member
        of $S_i$; $b_i := |c_i|$; \textbf{end if}
    \STATE \textbf{if} $|S_i'| > 1$ \textbf{then} $a_i := \min(S_i')$;
        $a_i' := \min(S_i'\setminus\{a_i\})$; $b_i := a_i' - a_i$; \textbf{end if}
\ENDFOR
\STATE \emph{// Now for each $i\in[1,m]$ we have a modular arithmetic
    equation $x \equiv a_i \pmod{b_i}$}
\STATE \textbf{return} YES iff there exists $x \in \N$ satisfying 
    $\bigwedge_{i=1}^m x \equiv a_i \pmod{b_i}$
\end{algorithmic}
\end{algorithm}

Our linear-time reduction that witnesses Theorem \ref{th:onegen} is given
in Algorithm \ref{algo:main}.
In this algorithm, the acting group is $G = \langle g \rangle$ with $g \in 
\permgroup_n$, expressed as a composition of disjoint cycles in a standard way, 
say,
$g = c_1c_2\cdots c_m$ where each $c_i$ is a cycle. Also part of the input is 
two strings $\vecV = v_1\ldots v_n,\vecW = w_1\ldots w_n \in \Gamma^n$ 
over a finite alphabet $\Gamma$. The orbit problem is to check whether
$f \vecV = \vecW$ for some $f \in G$, i.e., $f = g^r$ for some $r \in \Z$.
Since $c_i$'s are pairwise disjoint cycles, the question reduces to checking
if there exists $r \in \N$ such that
\[
    \forall i\in[1,m]: (c_i^r\vecV)[c_i] = \vecW[c_i] 
\]
In other words, for each $i \in [1,m]$, applying the action $c_i^r$ to $\vecV$ 
gives us $\vecW$ when restricted to the indices in $c_i$. To simplify 
notations in the above equation, we fix a letter $a \in \Gamma$ and, for
each $i \in [1,m]$, let the notation $\vecV_i$ (resp. $\vecW_i$) denote
the string $\vecV$ (resp. $\vecW$) in which all letters but those in positions 
$c_i$ are replaced by $a$. The equation above, therefore, amounts to 
\begin{equation}
    \forall i\in[1,m]: c_i^r\vecV_i = \vecW_i \tag{$*$}
\end{equation}
Essentially,
Algorithm \ref{algo:main} sequentially goes through each cycle $c_i$ and
computes the set $S_i$ of solutions $r$ to $c_i^r\vecV_i = \vecW_i$
as the set of solutions to the linear congruence equation 
$x \equiv a_i \pmod{b_i}$. Therefore, the set of solutions to (*) is precisely
the set of solutions to the system of congruence equations 
$\bigwedge_{i=1}^m x \equiv a_i \pmod{b_i}$, which by Proposition 
\ref{prop:sys-eqs} can be solved in linear time.

To compute $S_i$, we first prove a simple canonical form for $S_i$.
\begin{lemma}
For each $i=1,\ldots,m$, either $S_i = \emptyset$ or $S_i = a_i + b_i\Z$
for some $a_i \in [0,b_i)$ and $b_i \in (0,|c_i|]$ where $b_i$ divides 
$|c_i|$. 
    \OMIT{
    In the case
when $|S_i'| > 1$, we have $a_i = p_1$ and $b_i = p_2-p_1$, where $p_1 <
p_2$ are the smallest numbers in $S_i'$. Furthermore,
we may compute the pair $(a_i,b_i)$ of numbers in time $O(\Size{c_i})$.
}
\label{lm:normal}
\end{lemma}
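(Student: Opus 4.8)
The plan is to analyze the set $S_i$ of exponents $r$ with $c_i^r \vecV_i = \vecW_i$ by working entirely inside a single cycle. Fix $i$ and write $k := |c_i|$, and let $c_i = (e_0, e_1, \ldots, e_{k-1})$ in its fixed ordering. Applying $c_i^r$ to $\vecV$ cyclically rotates the length-$k$ string $\vecV[c_i]$ by $r$ positions (modulo $k$), leaving all other positions untouched; hence $c_i^r \vecV_i = \vecW_i$ holds iff $\vecW[c_i]$ is the $r$-th cyclic rotation of $\vecV[c_i]$, and in particular $S_i$ depends only on $r \bmod k$. So $S_i$ is a union of residue classes mod $k$, i.e.\ $S_i = S_i' + k\Z$ where $S_i' \subseteq [0,k)$ is exactly the set described in Algorithm~\ref{algo:main}. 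This immediately gives the dichotomy: either $S_i' = \emptyset$, and then $S_i = \emptyset$; or $S_i' \neq \emptyset$.

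The heart of the lemma is showing that when $S_i' \neq \emptyset$ it is itself an arithmetic progression whose period $b_i$ divides $k$. First I would observe that $r \in S_i'$ precisely when rotating $\vecV[c_i]$ by $r$ yields $\vecW[c_i]$; fixing one such $r_0$ (which exists since $S_i' \ne \emptyset$), we get $r \in S_i'$ iff rotating $\vecV[c_i]$ by $r - r_0$ fixes it, i.e.\ iff $r - r_0$ is a period (in the cyclic sense) of the word $u := \vecV[c_i]$. The set $P := \{ t \in \Z : \text{rotating } u \text{ by } t \text{ leaves } u \text{ unchanged}\}$ is the stabiliser of $u$ under the cyclic action of $\Z/k\Z$, hence a subgroup of $\Z/k\Z$ lifted to $\Z$; every subgroup of $\Z/k\Z$ has the form $b\Z/k\Z$ with $b \mid k$. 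Therefore $P = b_i \Z$ for the smallest positive period $b_i$ of $u$, with $b_i \mid k$, and $S_i' = (r_0 \bmod b_i) + b_i\Z$ restricted to $[0,k)$, so $S_i = r_0 + b_i\Z = a_i + b_i\Z$ with $a_i := r_0 \bmod b_i \in [0,b_i)$. This also justifies the formula in the algorithm: $a_i$ is the least element of $S_i'$ and, when $|S_i'| > 1$, $b_i$ is the gap between the two smallest elements (the cosets of $b_i\Z$ inside $[0,k)$ are equally spaced by $b_i$), while if $|S_i'| = 1$ then $b_i = k$.

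For the (omitted) complexity claim that $(a_i, b_i)$ can be found in time $O(\Size{c_i})$, the plan is to reduce the computation to a string-matching task: $r \in S_i'$ iff $\vecW[c_i]$ occurs as a length-$k$ factor at position $r$ of $\vecV[c_i]\vecV[c_i]$ (the doubled word), so running a linear-time pattern-matching routine (e.g.\ Knuth--Morris--Pratt) finds all of $S_i'$, and in fact the failure function already encodes the smallest period $b_i$ of $u$; the total work is linear in $k$, and since each symbol index in $c_i$ has size $\Omega(1)$, this is $O(\Size{c_i})$. I expect the main obstacle to be purely expository rather than mathematical: being careful about the distinction between ``cyclic period'' and ordinary period, and making sure the subgroup argument is stated for the additive group $\Z/k\Z$ (so that divisibility $b_i \mid k$ falls out cleanly) rather than re-deriving it by hand from the two smallest elements of $S_i'$.
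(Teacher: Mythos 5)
Your proof is correct and follows essentially the same approach as the paper: both identify the stabiliser of $\vecV_i$ (or its projection $\vecV[c_i]$) as a subgroup of a cyclic group of order $|c_i|$, conclude its index $b_i$ divides $|c_i|$, and then show $S_i$ is a coset of that stabiliser. The only cosmetic difference is that you work additively in $\Z/k\Z$ acting by rotation, while the paper works with $\langle c_i\rangle$ and $\Stab(\vecV_i)$ directly, using a slightly more hands-on coset argument via distinctness of the $k$ orbit elements — these are the same argument under the obvious isomorphism.
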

\begin{proof}
    Let $G_i = \langle c_i \rangle$ be the group generated by $c_i$.
    Consider the
    stabiliser $H := \Stab(\vecV_i)$ of $\vecV_i$ by $G_i$. 
    Then, $H$ is a subgroup of $G_i$. Since $G_i$ is a cyclic group of order
    $|c_i|$, $H$ is a cyclic group generated by some element $h = c_i^k$, where 
    $k$ is the smallest integer in $(0,|c_i|]$ such that $c_i^k \in H$. 
    It is known that $k$ must be a divisor of $|c_i|$. This implies that the 
    orbit containing $\vecV_i$ consists of precisely $k$ elements $\vecV_i, 
    c_i\vecV_i, \ldots, c_i^{k-1}\vecV_i$. 
    \OMIT{
    Every subgroup $H'$
    of a cyclic group $G'$ generated by $a$ is cyclic and is generated by
    $a^k$, where $k$ is 
    the smallest positive integer such that 
    $a^k$ is in $H'$ (e.g. see \cite[Theorem 7.16]{Hungerford}). Therefore, $H$ 
    is a
    subgroup of $G_i$ and is generated by some element $h = c_i^k$, where $k$
    is the smallest integer in $(0,|c_i|]$ such that $c_i^k \in H$. In fact, 
    $k$ must divide $|c_i|$. For,
    otherwise, $k < |c_i|$ and we let $q$ be the smallest integer such that
    $qk > |c_i|$. That is, $k \leq (q-1)k < |c_i|$. Thus, we have 
    $0 < qk - |c_i| < k$ contradicting that $k$ is the smallest integer in
    $(0,|c_i|]$ such that $c_i^k \in H$. Therefore, it follows that $k$ 
    divides $|c_i|$. This implies that the ($G_i$-)orbit of $\vecV_i$ 
    consists of precisely $k$ elements 
    $\vecV_i, c_i\vecV_i, \ldots, c_i^{k-1}\vecV_i$. 
}

    Suppose that $S_i \neq \emptyset$. Let $s$ be the smallest nonnegative
    integer in $S_i$, i.e., $c_i^s \vecV_i = \vecW_i$. Then, $s \in [0,k)$.
    We claim that $S_i = s + k\Z$. We have $s+k\Z \subseteq S_i$ since 
    $c_i^{s+kn}\vecV_i = c_i^s(c_i^{kn}\vecV_i) = c_i^s\vecV_i$. Conversely,
    if $t \in S_i$, then $c^t\vecV_i = c^p\vecV_i$, where $p$ is the smallest
    nonnegative integer such that $p \equiv t \pmod{k}$. Then, it must be
    the case that $p = s$ since $\vecV_i,c_i\vecV_i,\ldots,c_i^{k-1}\vecV_i$
    are all different. Thus, it follows that $S_i \subseteq s + k\Z$. Letting 
    $a_i = s$ and $b_i = k$ completes the proof. 
    \qed
\end{proof}
In view of Lemma \ref{lm:normal}, it suffices to show how to determine if
$S_i \neq \emptyset$ and, if so, compute $a_i$ and
$b_i$ in time $O(\Size{c_i})$. We first compute the 
length $|c_i|$ of the cycle $c_i$, which can be done in time $O(\Size{c_i})$.
[This is the same as how to compute the length of a list.] We proceed by
computing representatives $S_i' \subseteq [0,|c_i|)$ for $S_i$. This suffices
to compute $a_i$ and $b_i$ since:
    \begin{itemize}
        \item If $S_i' = \emptyset$, then $S_i = \emptyset$.
        \item If $S_i' = \{a\}$, then $S_i = a + |c_i|\Z$.
        \item If $|S_i'| > 1$, then $S_i = a_i + b_i\Z$, whenever $a_i$ and
            $a_i + b_i$ are the two smallest numbers in $S_i'$.
    \end{itemize}
This case-by-case treatment is reflected in Line 3--Line 5 within the for-loop
in Algorithm \ref{algo:main}.
To compute $S_i'$, we collect a subset of numbers $h \in [0,|c_i|)$ such that 
$c_i^h\vecV_i = \vecW_i$. A quadratic algorithm for this is easy to
come up with: sequentially go through $h \in [0,|c_i|)$ while computing the
current $c_i^h$, and save $h$ if $c_i^h\vecV_i = \vecW_i$ holds. 
One way to obtain a linear-time algorithm is to reduce our problem to
the 
\defn{string searching problem}: given a ``text'' $T \in \Sigma^*$ (over some 
finite alphabet $\Sigma$) and a ``pattern'' $P \in \Sigma^*$, find all 
positions $i$ in $T$ such that $T[i,i+|P|] = P$. This problem is solvable in 
time $O(|T|+|P|)$ by Knuth-Morris-Pratt (KMP) algorithm (e.g. see 
\cite{Cormen}).


\OMIT{
The number $|c_i|$ is stored in binary counter and can be computed by counting
upwards from 0 and incrementing by 1 as we go through the elements in $c_i$.
Although a single increment by 1 might take $O(|c_i|)$ bit operations
in the worst case (since we have to propagate the carry bit), it is known (e.g. 
see \cite[Chapter 17, p. 454]{Cormen}) 
that the entire sequence of operations actually takes time $O(|c_i|)$.
Therefore, accounting for all the cycles, this step takes
$\sum_{i=1}^m O(\Size{c_i}) = O(\sum_{i=1}^m \Size{c_i}) = O(\Size{g})$, which
is linear in the input size.
}

\OMIT{
\smallskip
\noindent
\textbf{Step 2: Computing representatives $\boldsymbol{S_i' \subseteq [0,|c_i|)}$ for
$S_i$.}
During this step, we collect a subset of numbers $h \in [0,|c_i|)$ such that 
$c_i^h(\vecV[c_i]) = \vecW[c_i]$. A quadratic algorithm for this is easy to
come up with: sequentially go through $h \in [0,|c_i|)$ while computing the
current $c_i^h$, and save $h$ if $c_i^h(\vecV[c_i]) = \vecW[c_i]$ holds. 
One way to obtain a linear-time algorithm is to reduce our problem to
the 
\defn{string searching problem}: given a ``text'' $T \in \Sigma^*$ (over some 
finite alphabet $\Sigma$) and a ``pattern'' $P \in \Sigma^*$, find all 
positions $i$ in $T$ such that $T[i,i+|P|] = P$. This problem is solvable in 
linear-time by Knuth-Morris-Pratt (KMP) algorithm (e.g. see \cite{Cormen}). 
}

We now show how to reduce our problem to the string searching problem in linear
time. We will also use the following running example to illustrate the 
reduction: $c = (6,5,7,3,2,1)$,
$\vecV = \underline{010}0\underline{011}11$, and
$\vecW = \underline{101}1\underline{100}01$, where the positions in
$\vecV$ and $\vecW$ that are modified by $c$ are underlined.
Below, we will work with the equivalent equation $(c_i^r\vecV)[c_i] = 
\vecW[c_i]$ (i.e. instead of $c_i^r\vecV_i = \vecW_i$).
Suppose that $c := c_i = (j_1,\ldots,j_k)$. We have $\vecV[c]
= v_{j_1}\ldots v_{j_k}$ and $\vecW[c] = w_{j_1}\ldots w_{j_k}$. 
\begin{lemma}
$(c\vecV)[c] = \rightshift(\vecV[c])$.
\label{lm:rs}
\end{lemma}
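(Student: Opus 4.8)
The plan is to verify the identity letter by letter by unwinding the definition of the group action. Write $c := c_i = (j_1,\ldots,j_k)$ and $\vecV = v_1\ldots v_n$. Recall that the action of a permutation $\pi \in \permgroup_n$ on a string is given by $\pi\vecV := (v_{\pi^{-1}(1)},\ldots,v_{\pi^{-1}(n)})$, so the $\ell$-th letter of $c\vecV$ is $v_{c^{-1}(\ell)}$. The crucial observation is that, as a permutation, the inverse cycle $c^{-1}$ maps $j_i \mapsto j_{i-1}$ for $2 \le i \le k$ and $j_1 \mapsto j_k$; that is, $c^{-1}$ shifts the indices appearing in the cycle \emph{backwards} by one position (cyclically).

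Now I would compute $(c\vecV)[c]$ directly. By the definition of the projection notation $\vecV[c]$ from the General Notations, the $i$-th letter of $(c\vecV)[c]$ is the $j_i$-th letter of $c\vecV$, which by the previous paragraph equals $v_{c^{-1}(j_i)} = v_{j_{i-1}}$, under the convention $j_0 := j_k$. Hence $(c\vecV)[c] = v_{j_k}v_{j_1}v_{j_2}\cdots v_{j_{k-1}}$. On the other hand, $\vecV[c] = v_{j_1}v_{j_2}\cdots v_{j_k}$, and by the definition of $\rightshift$ (which sends $a_1\ldots a_k$ to $a_k a_1\ldots a_{k-1}$) we get $\rightshift(\vecV[c]) = v_{j_k}v_{j_1}\cdots v_{j_{k-1}}$. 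These two words coincide, which proves the lemma.

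I expect no real obstacle here beyond careful bookkeeping. The one point deserving attention is that the group action is defined through $\pi^{-1}$ rather than $\pi$: this is exactly why applying $c$ yields a \emph{right} cyclic shift of the restricted word (applying $c^{-1}$ would instead yield a left shift). I would also make sure the index arithmetic modulo $k$ is handled consistently by fixing once and for all the representative word $(j_1,\ldots,j_k)$ for the cycle $c$, so that both $\vecV[c]$ and $\rightshift(\vecV[c])$ are unambiguous, as already noted in the Permutation Groups paragraph.
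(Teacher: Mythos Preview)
Your proof is correct and follows essentially the same approach as the paper: both arguments compare the two words letter by letter by unwinding the definitions of the group action, the projection $\vecV[c]$, and $\rightshift$. Your presentation is in fact slightly cleaner, since you invoke the formula $(c\vecV)_\ell = v_{c^{-1}(\ell)}$ and the explicit description of $c^{-1}$ directly, whereas the paper spells out the piecewise definition of $c\vecV$; but the underlying computation is identical.
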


\noindent
In other words, if $\Dom(c) = \{j_1,\ldots,j_k\}$, the effect of $c$ on 
$\vecV$ when restricted to $\Dom(c)$ coincides with applying a cyclical right
shift on the string $\vecV[c]$. Following our running example, it is easy
to check that $\vecV[c] = 101010$ and $(c\vecV)[c] = \rightshift(\vecV[c]) = 
010101$.
\begin{proof}[of Lemma \ref{lm:rs}]
Let $\vecU = u_1\ldots u_k := (c\vecV)[c]$ and $\vecU = u_1'\ldots u_k' :=
\rightshift(\vecV[c])$. It suffices to show that $u_t = u_t'$ for all $t \in
\Z_k$. By definition of $\rightshift$, it follows that $u_t' = v_{j_{t-1}}$.
Now suppose that $\vecV' = v_1'\ldots v_n' := c\vecV$. Then 
\[
    v_j' := \left\{ \begin{array}{cc}
                        v_j & \quad \text{if $j \notin \Dom(c)$} \\
                        v_{j'} & \quad \text{if $j \in \Dom(c)$ and,
                                                for some $t \in \Z_k$,
                                                $j = j_{t+1}$ and 
                                                $j' = j_t$.}
                    \end{array}
           \right.
\]
So, for $t \in \Dom(c)$, we have $u_t = ((c\vecV)[c])[t] = (\vecV'[c])[t] = v_{j_t}'
= v_{j_{t-1}}$. This proves that $u_t = u_t'$. \qed
\end{proof}
\begin{lemma}
For each $r \in \N$, we have $(c^r\vecV)[c] = \rightshift^r(\vecV[c])$.
\label{lm:rs:induction}
\end{lemma}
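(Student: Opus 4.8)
The plan is a straightforward induction on $r$, using Lemma \ref{lm:rs} as the one-step case. First I would record the base case $r = 0$: by definition $c^0 = \Identity$, so $(c^0\vecV)[c] = \vecV[c] = \rightshift^0(\vecV[c])$, and there is nothing to check.

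For the inductive step, suppose $r \geq 1$ and that $(c^{r-1}\vecU)[c] = \rightshift^{r-1}(\vecU[c])$ holds for every $\vecU \in \Gamma^n$ (it is worth phrasing the induction hypothesis for an arbitrary argument, since the recursion will feed in a shifted vector rather than $\vecV$ itself). Write $c^r\vecV = c(c^{r-1}\vecV)$ using associativity of the group action. Applying Lemma \ref{lm:rs} with the vector $c^{r-1}\vecV$ in place of $\vecV$ gives $(c^r\vecV)[c] = \bigl(c(c^{r-1}\vecV)\bigr)[c] = \rightshift\bigl((c^{r-1}\vecV)[c]\bigr)$. Now invoke the induction hypothesis on the inner expression: $(c^{r-1}\vecV)[c] = \rightshift^{r-1}(\vecV[c])$. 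Substituting, $(c^r\vecV)[c] = \rightshift\bigl(\rightshift^{r-1}(\vecV[c])\bigr) = \rightshift^r(\vecV[c])$, which closes the induction.

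I do not anticipate a genuine obstacle here; the only thing to be careful about is bookkeeping. Specifically, Lemma \ref{lm:rs} must be used in the form "$(c\vecU)[c] = \rightshift(\vecU[c])$ for any $\vecU$" — its proof was written for a generic vector, so this is legitimate — and one should note that $c^{r-1}\vecV$ is again an element of $\Gamma^n$ with the cycle $c$ acting on the same index set, so the restriction notation $(\cdot)[c]$ is well-defined at each stage. Once that is observed, the argument is a two-line application of associativity of the action plus the inductive hypothesis, and no case analysis on whether $S_i$ is empty or on the divisibility structure of $|c|$ is needed at this point.
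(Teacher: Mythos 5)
Your proof is correct and matches the paper's argument essentially line for line: induction on $r$, with the inductive step unfolding $c^r\vecV = c(c^{r-1}\vecV)$, applying Lemma \ref{lm:rs} to peel off one $\rightshift$, and then invoking the inductive hypothesis. One small bookkeeping remark: the quantification over arbitrary $\vecU$ is needed for Lemma \ref{lm:rs} (which you do apply to the shifted vector $c^{r-1}\vecV$), not for the induction hypothesis itself, which you only ever invoke on the original $\vecV$ — you note the former correctly at the end, so the earlier parenthetical about strengthening the IH is harmless but superfluous.
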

\begin{proof}
This lemma can be proved by induction on $r \in \N$. The base case $r = 0$
is vacuous. For the induction case, we assume the induction hypothesis: 
$c^{r-1}\vecV = \rightshift^{r-1}(\vecV[c])$. It follows that
\[
    (c^r\vecV)[c] = (c(c^{r-1}\vecV))[c] = \rightshift((c^{r-1}\vecV)[c])
    = \rightshift(\rightshift^{r-1}(\vecV[c])) = \rightshift^r(\vecV[c]).
\]
The third equality is by Lemma \ref{lm:rs}, while the fourth equality is by
the induction hypothesis. This completes the proof. \qed 
\end{proof}

\OMIT{
Lemma \ref{lm:rs:induction} implies that the set $S := S_i \subseteq \N$ of 
solutions $r$ to the equation $(c_i^r\vecV)[c_i] = \vecW[c_i]$ is a finite 
union of arithmetic
progressions of the form $a+k\Z$, where $k = |c_i|$ and $a \in [0,k)$. This is 
simply because $\rightshift^{r+k}(\vecV[c_i]) = \rightshift^r(\vecV[c_i])$. We 
will finitely represent $S$ by the offsets $a$'s and the unique period $k$ in 
these arithmetic progressions.
}

Define the text $T := 
\vecV[c]\vecV[c]$ and the pattern $P := \vecW[c]$. Observe that, for each 
$r \in [k]$, $P$ is matched at position $r$ in $T$ iff 
$\rightshift^{r-1}(\vecV[c])
= \vecW[c]$. Therefore, after running the KMP algorithm with the solution set
$S'$, the set $S_i'$ will be $\{ r -1 : r \in S'\}$. Observe that this
step takes time $O(|T|+|P|) = O(\Size{c})$.
\begin{example}
Continuing with our running example, it follows that
$T = \vecV[c]\vecV[c] = 101010101010$ and $P = \vecW[c] = 010101$. We see
that $P$ matches $T$ at positions $S' = \{2,4,6\}$. This implies that 
$S_i' = \{1,3,5\}$ and so the set $S_i$ of solutions $r \in \Z$ to the 
equation $c^r\vecV_i = \vecW_i$ is $1 + 2\Z$.
$\blacksquare$
\end{example}

\OMIT{
\noindent
Observe that, for each $c_i$, this step takes time $O(\Size{c_i})$. Therefore,
going through all the $c_i$'s, this step takes time $\sum_{i=1}^m O(\Size{c_i})
= O\left(\sum_{i=1}^m \Size{c_i}\right) = O(\Size{g})$, i.e., linear in input 
size.
}

\OMIT{
\noindent
\textbf{Step 3: Representing $S_i$ as a single arithmetic progression.} 
In the previous step, we have computed the representatives for $S_i$ in
$[0,|c_i|)$. This only shows that $S_i$ is a finite union of arithmetic
progressions, which cannot in general be expressed as the set of solutions to a 
single linear
congruence equation. In this step, we show that $S_i$ can be represented 
as a single arithmetic progression and furthermore justify why the pseudocode in
the last three lines in Algorithm \ref{algo:main} computes $S_i$. 
%


\noindent
To prove this lemma, we will use the following number-theoretic result 
by Erd\"{o}s and Graham \cite{EG72}. [Also see the formulation
in \cite{Chrobak,To-IPL}, in which the result was applied in automata theory.]
\begin{proposition}
Let $0 < p_1 < \ldots < p_s \leq k$ be natural numbers. Then, the 
set $X := \{ \sum_{i=1}^s p_ix_i : x_1,\ldots,x_s \in \N\} \subseteq \N$ 
coincides with
the set $S \cup (a+b\N)$, where $S \subseteq \N$ contains no numbers
bigger than $k^2$, and $a$ is the least integer bigger than $k^2$ that is
a multiple of $b := \gcd(p_1,\ldots,p_s)$.
\label{prop:erdos}
\end{proposition}
\begin{proof}[of Lemma \ref{lm:normal}]
We use the shorthand $S$ (resp. $c$) for $S_i$ (resp. $c_i$).
From Step 2, we know that $S$ is a union of arithmetic progressions
$\bigcup_{j=1}^s \left(p_j + k\Z\right)$, for some $p_j \in [0,k)$ and $k = |c|$. Without
loss of generality, we assume that $p_1 < \cdots < p_s$. If
$s \in \{0,1\}$, then we are done. Suppose now that $s > 1$. Let $\vecV[c] = 
d_1\ldots d_k$ and $\vecW[c] = d_1'\ldots d_k'$. In this case, thanks to
Lemma \ref{lm:rs:induction}, it is the case that for each $j\in[1,s]$ and 
$l \in [1,k]$, we have $d_{l+p_j \mod k} = d_l'$. 
Let $\Delta :=
\{ p_{h'} - p_h : \forall h < h' \in [1,s] \} \cup \{k\}$ be the set of all 
differences 
in the offsets of the arithmetic progressions union the set $\{k\}$ containing
the common period. By transitivity of `$=$', 
it follows that $d_{l \mod k} = d_{l+\delta \mod k}$ for each $l \in [0,k)$
and $\delta \in \Delta$. Again, by transitivity of `$=$', it follows that
$d_{l \mod k} = d_{l+\sigma \mod k}$ for each $l \in [0,k)$ and
each number $\sigma$ in the set $X := \{ \left(\sum_{i=1}^s p_ix_i\right) +
kx_{s+1}: x_1,
\ldots,x_{s+1} \in \N\}$. By Proposition \ref{prop:erdos}, we have 
$X = S \cup (a+b\N)$ where $S \subseteq [0,k^2]$ and $a$ is the least
integer bigger than $k^2$ that is a multiple of $b := \gcd(\Delta)$.
Observe also that $b$ divides all numbers in $S$ and so we have
$d_l = d_{l'}$ for each $l,l' \in [0,k)$ with $l \equiv l' \pmod{b}$. 
In other words, we have $\vecV = 
\underbrace{\vecV' \ldots \vecV'}_{\text{$k/b$ times}}$, where $\vecV' = 
d_1\ldots d_b$. Since $\rightshift^{p_1}(\vecV[c]) = \vecW[c]$, 
it follows that, for each $q \in \N$, $\rightshift^{p_1+bq}(\vecV[c]) = 
\rightshift^{p_1}(\rightshift^{bq}(\vecV[c])) = \rightshift^{p_1}(\vecV[c])
= \vecW[c]$. Therefore, we have $S \subseteq p_1 + b\N$. On the other hand,
since $b$ divides $k$ and each number in $\{p_j - p_1 : j \in [2,s]\}$, we also 
have $S \supseteq p_1+b\N$. This gives us $S = p_1 + b\N$.

From Step 2, we have computed the set $S' := S \cap [0,|c|)$. If
$S' = \emptyset$, we also knew that $S_i = \emptyset$. If $S' = \{p\}$ is
a singleton, we have $S = p+k\Z$. If $|S'| > 1$, we find the two smallest
numbers $p_1 < p_2$ in $S'$. It follows that $S = p_1 + (p_2-p_1)\Z$. Observe
that this takes time $O(\Size{c})$. [In fact, it is only linear in the size
of the two smallest numbers since we ignore the rest of the members of $S'$.]
\qed
\end{proof}
\begin{example}
Continuing with our running example, we have $S = (1+6\Z) \cup (3+6\Z)
\cup (5+6\Z) = 1+2\Z$. $\blacksquare$
\end{example}

\noindent
The last three lines in Algorithm \ref{algo:main} runs in constant time since
determining whether $|S_i| = 0$, $|S_i| = 1$, or $|S_i| > 1$ requires the
algorithm to explore only a constant number of elements in $S_i$.

\OMIT{
The algorithm has two stages. Firstly, it will deal with each individual 
cycle separately. Namely, for each $i\in[1,m]$, the algorithm will compute the 
set $S_i$ of all $r \in \N$ satisfying $(c_i^r\vecV)[c_i] = \vecW[c_i]$. As
we will see in Section \ref{sec:?}, $S_i$ is precisely the set of solutions to
$x \equiv a_i \pmod{b_i}$, where $a_i$ and $b_i$ are defined in Algorithm
\ref{algo:main}. Therefore, our problem has reduced to checking if
the system of modular arithmetic equations 
$\bigwedge_{i=1}^m x \equiv a_i \pmod{b_i}$ has a solution $x \in \N$. 
}
\OMIT{
In Section \ref{sec:reduce}, we deal with the reduction from checking
(*) to checking satisfiability of a system of modular arithmetic equations. In 
Section \ref{sec:sys-eqs}, we show how to check the latter. Initially, we 
will assume that arithmetic operations take constant time, which is reasonable
since our algorithm only uses ``small'' numbers, i.e., whose size in 
binary is logarithmic in input size. Despite this, we will show later in 
\ref{sec:bit} that we can make the algorithm run in time linear in the number
of bit operations.
}
}

\noindent
\textbf{Summing up.} To sum up, the time spent computing the linear
congruence equation $x \equiv a_i \pmod{b_i}$ for each $i\in[1,m]$ is
$O(\Size{c_i})$. Therefore, our reduction runs in time 
$O(\sum_{i=1}^m \Size{c_i}) = O(\Size{g})$, which is linear in input size.
Therefore, invoking Proposition \ref{prop:sys-eqs} on the resulting
system of linear congruence equations, we obtain the set of solutions to
(*) in linear time.
\begin{example}
Let us continue with our running example. Let 
\[
    g_1 := c(4,8) = (6,5,7,3,2,1)(4,8), \quad g_2 := c(4,8,9) = 
        (6,5,7,3,2,1)(4,8,9).
\]
Then, running Algorithm \ref{algo:main} on $g_1$ yields the system 
$x \equiv 1 \pmod{2} \wedge x\equiv 1\pmod{2}$, which is equivalent to
$x \equiv 1 \pmod{2}$. Running Algorithm \ref{algo:main} on $g_2$ yields
the system $x \equiv 1 \pmod{2} \wedge x \equiv 1 \pmod{3}$. Both systems
are solvable. $\blacksquare$
\end{example}

\section{Making do with linearly many bit operations}
\label{sec:bit}
Thus far, we have assumed that arithmetic operations take constant time.
In this section, since Algorithm \ref{algo:sys-eqs} makes a substantial
use of basic arithmetic operations, we will revisit this assumption. It turns
out that, although our reduction (Algorithm \ref{algo:main}) to solving a 
system of linear congruence 
equations runs in linear time in the bit complexity model, the algorithm for
solving the system of equations (Algorithm \ref{algo:sys-eqs}) uses at least a
cubic number of bit-arithmetic operations. The main results in this section are 
two-fold: (1) on inputs
given by our reduction, Algorithm \ref{algo:sys-eqs} runs in sublinear time
(more precisely, $O(\log^5 n)$)
\emph{on average} in the bit complexity model, and (2) there exists another 
algorithm for solving a system of linear congruence equations (with numbers
in the input represented in unary) that runs in linear time in the bit 
complexity model in the worst case.

We begin with two lemmas that provide the running time of Algorithm
\ref{algo:main} and Algorithm \ref{algo:sys-eqs} in the bit complexity model.
\begin{lemma}
Algorithm \ref{algo:main} runs in linear time in the bit complexity model.
\label{lm:bit:linear-main}
\end{lemma}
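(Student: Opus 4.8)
The plan is to traverse Algorithm~\ref{algo:main} and charge every bit operation spent processing a cycle $c_i$ to the portion $\Size{c_i}$ of the input that describes $c_i$; summing over the $m$ cycles then yields $O(\sum_{i=1}^m \Size{c_i}) = O(\Size{g})$, which is linear in the input size since the input contains $g$ itself. The one arithmetic fact that drives the whole argument is that the entries of $c_i$ are \emph{distinct} elements of $[n]$: among $|c_i|$ distinct positive integers at least half exceed $|c_i|/2$, so
\[
  \Size{c_i} \;=\; \sum_{t} \Size{a_{i,t}} \;=\; \Omega\bigl(|c_i|\log|c_i|\bigr).
\]
Hence any subroutine that performs $O(|c_i|)$ operations on integers of magnitude $O(|c_i|)$ --- i.e.\ on $O(\log|c_i|)$-bit numbers --- already fits inside the $O(\Size{c_i})$ budget even when every bit operation is counted, and this is precisely the regime in which the per-cycle work of Algorithm~\ref{algo:main} lives. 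Note in particular that the integers manipulated while handling $c_i$ are \emph{local} positions within the length-$O(|c_i|)$ strings $\vecV[c_i],\vecW[c_i]$ (together with the counter for $|c_i|$), so they never carry more than $O(\log|c_i|)$ bits, regardless of how large $n$ is.

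I would then bound the three tasks performed for each $c_i$. (i) Computing $|c_i|$: reading the list representation of $c_i$ costs $O(\Size{c_i})$ bit operations, and incrementing a binary counter once per element costs, by the standard amortized analysis of binary counters, only $O(|c_i|)$ bit flips in total. (ii) Building $T := \vecV[c_i]\vecV[c_i]$ and $P := \vecW[c_i]$ (equivalently $\vecV_i,\vecW_i$): for each of the $|c_i|$ entries $a_{i,t}$ of $c_i$ this is a lookup of $v_{a_{i,t}}$ and $w_{a_{i,t}}$, costing $O(\Size{a_{i,t}})$ bit operations for the address plus copying $O(\log|\Gamma|)$ bits, hence $O(\Size{c_i})$ in total (treating $\Gamma$ as fixed, or charging the $\log|\Gamma|$ factor to $|\vecV|,|\vecW|$). (iii) Running Knuth--Morris--Pratt on $(T,P)$ and reading off $a_i,b_i$: KMP does $O(|T|+|P|)=O(|c_i|)$ character comparisons and index updates, each index being a local position $\le 2|c_i|$, i.e.\ an $O(\log|c_i|)$-bit number; by the usual amortized argument (the text pointer only advances, and the total backward motion of the pattern pointer along failure links is bounded by its total forward motion) both the failure-function construction and the scan cost $O(|c_i|\log|c_i|) = O(\Size{c_i})$ bit operations, and extracting the first two matches, forming their difference as in Lines~3--5, and emitting $x\equiv a_i\pmod{b_i}$ add only $O(\log|c_i|)$ further bits.

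Summing (i)--(iii) over $i$ gives $O(\sum_{i=1}^m \Size{c_i}) = O(\Size{g})$ bit operations, which establishes the lemma. The step I expect to require the most care is the bit-level accounting of KMP: one must verify that no quantity it manipulates ever needs more than $O(\log|c_i|)$ bits --- this is exactly why it matters that all positions are \emph{local} to the two length-$O(|c_i|)$ strings rather than global indices in $[n]$, which could need $\Theta(\log n)$ bits --- and that the failure-link jumps, though each may shift the pattern pointer by a variable amount, number only $O(|c_i|)$ in aggregate by the standard potential argument, so that the entire string-matching phase stays within the $O(|c_i|\log|c_i|)=O(\Size{c_i})$ bound supplied by the distinctness of the cycle's entries. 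A secondary point worth stating explicitly is the amortized binary-counter bound used for the cycle lengths, together with the standard cost assumption for random access into the arrays $\vecV,\vecW$.
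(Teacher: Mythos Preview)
Your argument is correct and follows the same per-cycle decomposition as the paper, summing $O(\Size{c_i})$ over $i$ to obtain $O(\Size{g})$. The paper's own proof is in fact briefer than yours: it explicitly addresses only the amortized binary-counter computation of $|c_i|$ and the single subtraction $b_i := a_i' - a_i$, and then passes directly to the $O(\Size{c_i})$ bound without revisiting the KMP step in the bit model. Your proof supplies what the paper leaves implicit, namely the observation that distinctness of the cycle entries forces $\Size{c_i} = \Omega(|c_i|\log|c_i|)$, which is exactly the inequality needed to absorb the $O(|c_i|\log|c_i|)$ bit cost of KMP's index manipulations. You also spell out the string-construction and random-access costs, which the paper does not mention here. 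So the route is the same, but your version is tighter; the distinctness bound in particular is a point worth keeping, since without it the KMP accounting in the bit model would not obviously close.
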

\begin{proof}
On $i$th iteration, the number $|c_i|$ is stored in binary counter and can be 
computed by counting upwards from 0 and incrementing by 1 as we go through the 
elements in $c_i$.
Although a single increment by 1 might take $O(|c_i|)$ bit operations
in the worst case (since we have to propagate the carry bit), it is known (e.g. 
see \cite[Chapter 17, p. 454]{Cormen}) 
that the entire sequence of operations actually takes time $O(|c_i|)$.
Finally, since addition and substraction of two numbers can easily be performed 
in $O(\beta)$ time on numbers that use at most $\beta$ bits,
the operation $b_i := a_i' - a_i$ on the last line of the iteration
takes at most $O(\log |c_i|)$ time. Therefore, accounting for all the cycles, 
the algorithm takes
$\sum_{i=1}^m O(\Size{c_i}) = O(\sum_{i=1}^m \Size{c_i}) = O(\Size{g})$, which
is linear in the input size. \qed
\end{proof}
\begin{lemma}
On an input $\bigwedge_{i=1}^m x \equiv a_i \pmod{b_i}$ with $N = \max\{ b_i :
i\in[1,m]\}$, Algorithm \ref{algo:sys-eqs} uses at most $m\log N$ bits to 
store any numeric variables. Furthermore, the algorithm runs in time 
$O(m^3 \log^2 N)$ in the bit complexity model.
\label{lm:superlinear}
\end{lemma}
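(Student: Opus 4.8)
The plan is to establish two facts in turn: a bit-size invariant bounding every numeric variable used by Algorithm~\ref{algo:sys-eqs} by $m\log N$ bits throughout its run, and then a per-iteration cost bound of $O(m^2\log^2 N)$, which summed over the $m$ iterations yields the claimed $O(m^3\log^2 N)$. (We may assume the offsets are reduced, $0\le a_i<b_i\le N$, after one initial reduction; this is automatic for inputs produced by Algorithm~\ref{algo:main}.)

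\textbf{The bit-size invariant.} First I would prove, by induction on $i$, that after the $i$th pass through the for-loop the variable $b$ equals $\operatorname{lcm}(b_1,\dots,b_i)$ and the variable $a$ satisfies $0\le a<b$; the base case is $a=0$, $b=1$. For the inductive step, recall that the call to the algorithm of Lemma~\ref{lm:linear-congruence} on $\varphi(y):=by\equiv a_i-a\pmod{b_i}$ returns, when solvable, the set $a'+b'\Z$ with $b'=b_i/d$ and $a'\in[0,b')$, where $d=\gcd(b,b_i)$. Hence the update $b:=bb'$ sets $b$ to $b\cdot(b_i/\gcd(b,b_i))=\operatorname{lcm}(b,b_i)$, which by induction equals $\operatorname{lcm}(b_1,\dots,b_i)$, and the update $a:=a'b+a$ produces a value at most $(b'-1)b+(b-1)=b'b-1$, hence strictly below the new value $b'b$ of $b$. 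Since $\operatorname{lcm}(b_1,\dots,b_i)$ divides $\prod_{j\le i}b_j\le N^i\le N^m$, we get $b<N^m$ and therefore $0\le a<N^m$ as well, so each occupies at most $m\log N$ bits. The only other numeric quantities are the coefficient $b$ and right-hand side $a_i-a$ of $\varphi$ (with $|a_i-a|<N^m$, hence again at most $m\log N$ bits) and the outputs $a'<b'\le N$ and $b'\le N$ of Lemma~\ref{lm:linear-congruence}, a constant number of $O(\log N)$-bit numbers. This gives the first assertion.

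\textbf{The time bound.} Each pass through the loop performs only a constant number of arithmetic operations: the subtraction $a_i-a$; one invocation of the extended-Euclidean-based algorithm of Lemma~\ref{lm:linear-congruence}; the multiplication $a'b$; the addition $a'b+a$; and the multiplication $bb'$. By the invariant every operand has $O(m\log N)$ bits, so each addition, subtraction and schoolbook multiplication costs $O((m\log N)^2)=O(m^2\log^2 N)$ bit operations. For the Lemma~\ref{lm:linear-congruence} call I would use the standard fact that the extended Euclidean algorithm on two inputs of at most $\ell$ bits uses $O(\ell^2)$ bit operations (cf.~\cite{Cormen,Shoup}); with $\ell=O(m\log N)$ this is again $O(m^2\log^2 N)$, and the remaining work of that algorithm (forming $x_0$ from a B\'ezout coefficient, plus one reduction) is $O(1)$ further operations on $O(m\log N)$-bit numbers. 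Hence each pass costs $O(m^2\log^2 N)$, and summing over $i=1,\dots,m$ gives $O(m^3\log^2 N)$.

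\textbf{Where the work is.} The two points requiring care are: (i) verifying that the invariant $0\le a<b$ really does survive the update $a:=a'b+a$, which is exactly where the bounds $a'<b'$ (from Lemma~\ref{lm:linear-congruence}) and $a<b$ (inductive hypothesis) are needed; and (ii) charging the gcd/extended-Euclidean sub-step via the $O(\ell^2)$ whole-algorithm bound rather than naively multiplying its $\Theta(\log N)$ step count by a pessimistic $O((m\log N)^2)$ per-step cost, which would spuriously inflate the estimate by a $\log N$ factor. Everything else is routine schoolbook-arithmetic bookkeeping.
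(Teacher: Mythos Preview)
Your proposal is correct and follows essentially the same approach as the paper: bound the bit length of $a,b$ by $m\log N$ via the per-iteration growth (you sharpen this to $b=\operatorname{lcm}(b_1,\dots,b_i)$, the paper merely notes the bits grow by at most $\log b_i$), invoke the $O(\ell^2)$ bit-complexity of the extended Euclidean algorithm with $\ell=O(m\log N)$, and sum over $m$ iterations. Your version is more carefully argued---in particular the explicit check that $0\le a<b$ survives the update and the remark in point~(ii) about not over-counting the Euclidean step---but the skeleton is identical.
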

\begin{proof}
On $i$th iteration, the number of bits used to store $a$ and $b$ grow by
at most $\log b_i$. On the other hand, the invariant that $a', b' \in [0,b_i)$
is always maintained on the $i$th iteration and so they only need at most
$\log N$ bits to represent throughout the algorithm. Hence, the algorithm uses 
$M = O(m\log N)$ bits to store $a$, $b$, $a'$, and $b'$. Extended Euclidean 
Algorithm runs in time $O(M^2)$ on inputs where each number uses at most 
$M$ bits (cf. \cite[Problem 31-2]{Cormen}), which also bounds the time it takes
on each iteration. Therefore, the algorithm takes at most $O(mM^2) = O(m^3
\log^2 N)$ in the bit complexity model. \qed
\end{proof}

We now provide an average case analysis of the running time of Algorithm
\ref{algo:sys-eqs} on system of linear congruence equations given by our
reduction. The input to the orbit problem over cyclic groups includes a 
permutation $g \in S_n$ and two vectors $\vecV,\vecW \in \Gamma^n$. We briefly
recall the setting of average-case analysis (cf. \cite{Sedgewick}). Let
$\Pi_N$ be the set of all inputs to the algorithm of size $N$. Likewise, let 
$\Sigma_N$ be the sum of the \emph{costs} (i.e. running time) of the algorithm
on \emph{all} inputs of size $N$. Hence, if $\Pi_{N,k}$ is the cost of the
algorithm on input of size $N$ with running time $k$, then $\Sigma_N = \sum_k 
k\Pi_{N,k}$.
The \defn{average case complexity of the algorithm} is defined to be
$\Sigma_N/\Pi_N$.
\begin{theorem}
The expected running time of Algorithm \ref{algo:sys-eqs} in the bit 
complexity model on inputs provided by Algorithm \ref{algo:main} is 
$O(\log^5 n)$.
\label{th:average}
\end{theorem}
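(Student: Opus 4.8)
The plan is to bound the average cost $\Sigma_N/\Pi_N$ of Algorithm~\ref{algo:sys-eqs} where $N = n$ is the size of the original orbit-problem input, by combining Lemma~\ref{lm:superlinear} (which says the cost on a given system $\bigwedge_{i=1}^m x \equiv a_i \pmod{b_i}$ is $O(m^3\log^2 N)$ with $N = \max_i b_i$) with a probabilistic analysis of \emph{what systems actually arise} as output of Algorithm~\ref{algo:main}. The key structural facts about these systems are: (i) each modulus $b_i$ divides $|c_i|$, where $c_1,\ldots,c_m$ is the disjoint-cycle decomposition of the input permutation $g$; (ii) the cycle lengths $|c_i|$ are parts of a partition of (at most) $n$, so $\sum_i |c_i| \le n$; and (iii) the number $m$ of equations is at most the number of cycles, hence at most $n$. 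The crux is that the worst-case estimate $m = \Theta(n)$ essentially never happens, and more importantly the \emph{product} of the moduli — which governs the bit-length $M = O(m\log N)$ appearing in Lemma~\ref{lm:superlinear} — is tightly controlled by the order of $g$ via the Landau function.

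First I would set up the average-case bookkeeping: partition the set $\Pi_N$ of inputs of size $n$ according to the cycle type $\lambda = (\lambda_1,\ldots,\lambda_m)$ of the permutation $g$ (with $\sum \lambda_j \le n$), noting that the strings $\vecV,\vecW$ and the "filler" part of $g$ contribute only polynomially to the count and can be absorbed, since we are taking a ratio $\Sigma_N/\Pi_N$ over the \emph{same} size $N$. For a fixed cycle type, the cost is $O(m^3\log^2 N) = O(n^3 \log^2 n)$ in the absolute worst case, but we instead write $M = \sum_i \log b_i \le \sum_i \log\lambda_i = \log\bigl(\prod_i \lambda_i\bigr)$, and $\prod_i\lambda_i$ is at most $g(n)$, the maximal order of an element of $S_n$ (Landau's function), for which Landau's theorem gives $\log g(n) = \Theta(\sqrt{n\log n})$, i.e. $M = O(\sqrt{n\log n})$. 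Combined with $m \le \pi$-type bounds ($m = O(\sqrt n)$ for a "typical" cycle type, or even the crude $m\le n$), the Extended Euclidean cost $O(mM^2)$ on each iteration is at most $n\cdot O(n\log n) = O(n^2\log n)$ worst-case; this alone is not yet $O(\log^5 n)$, so the real work is the averaging.

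The main step — and the main obstacle — is showing that a uniformly random permutation of $[n]$ (equivalently, a uniformly random input, after absorbing the polynomial string/filler factors) has, with overwhelming probability, both few cycles and small cycle lengths relevant to the computation. Here I would invoke the classical facts from the theory of random permutations: the number of cycles of a random permutation of $[n]$ is concentrated around $\ln n$ (it is $\sum_{k=1}^n B_k$ with $B_k$ independent Bernoulli$(1/k)$, so mean $H_n \sim \ln n$ and variance $\sim \ln n$), and more refined large-deviation bounds give that the probability of having more than $C\log n$ cycles, or of $\prod_i\lambda_i$ exceeding $2^{\mathrm{polylog}(n)}$, is $o(1/\mathrm{poly}(n))$ — small enough to kill even the $O(n^3\log^2 n)$ worst-case contribution of the bad inputs. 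On the dominant "good" event we have $m = O(\log n)$ and $N = \max_i b_i \le \max_i\lambda_i = O(\mathrm{poly}(n))$, so $\log N = O(\log n)$; then Lemma~\ref{lm:superlinear} gives cost $O(m^3\log^2 N) = O(\log^3 n\cdot\log^2 n) = O(\log^5 n)$. Summing the bad-event contribution (probability $o(n^{-c})$ times worst-case cost $O(n^3\log^2 n)$, which is $o(1)$ for $c$ large enough) against the good-event contribution $O(\log^5 n)$, the expectation is $O(\log^5 n)$ as claimed.

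I expect the delicate points to be exactly (a) justifying that the string components $\vecV,\vecW\in\Gamma^n$ and the non-moved indices of $g$ can be factored out cleanly from the ratio $\Sigma_N/\Pi_N$ — i.e. that "uniform over inputs of size $n$" induces (up to lower-order effects) the uniform distribution on cycle types that we analyze — and (b) getting a \emph{quantitative} tail bound on the number of cycles strong enough that the rare inputs with $m$ close to its maximum contribute $o(1)$ rather than merely $o(\log^5 n)$; for this I would use the explicit Poisson-approximation / Chernoff-type estimates for $\sum B_k$ rather than just the mean and variance. Everything else is assembling these estimates through Lemma~\ref{lm:superlinear}.
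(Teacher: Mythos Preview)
Your approach is correct in spirit and would yield the bound, but it takes a considerably more circuitous route than the paper's argument, and one of your detours (the Landau function) is entirely unnecessary.

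The paper's proof is a direct moment computation. It observes (as you do) that for a permutation with $k$ cycles the cost is $O(k^3\log^2 n)$, since each $b_i \le |c_i| \le n$ gives $\log N \le \log n$ immediately --- no appeal to products of moduli or Landau's function is needed. The averaging is then done over $\Pi_n = |S_n| = n!$ (the paper simply takes the input space to be $S_n$, so your concern~(a) about factoring out $\vecV,\vecW$ does not arise), and the expected cost becomes $O(\log^2 n)\cdot \frac{1}{n!}\sum_{k} k^3 \StirlingFirst{n}{k}$. The whole argument therefore reduces to showing $\frac{1}{n!}\sum_k k^3 \StirlingFirst{n}{k} \sim \ln^3 n$, i.e.\ computing the third moment $E[K_n^3]$ of the cycle count directly. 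The paper does this with classical Stirling-number identities ($\sum_k \StirlingFirst{n}{k}\binom{k}{m} = \StirlingFirst{n+1}{m+1}$ and the harmonic-number formulas for $\StirlingFirst{n+1}{3}$, $\StirlingFirst{n+1}{4}$), obtaining $g(n) \sim \ln^3 n$ exactly.

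Your good-event/bad-event decomposition with Chernoff tails for $K_n = \sum_j B_j$ would also work: $P(K_n > C\ln n) = n^{-\Omega(1)}$ for large $C$ does kill the $O(n^3\log^2 n)$ worst case, and on the good event you get $O(\log^5 n)$. But this is strictly more machinery than needed --- you are proving $E[K_n^3] = O(\log^3 n)$ by concentration when it can be read off from standard formulas. The Landau-function paragraph in your proposal contributes nothing to the final bound (you yourself note it ``is not yet $O(\log^5 n)$'') and should be dropped. The cleaner version of your own idea, incidentally, is just to expand $E[K_n^3] = E[(H_n + (K_n - H_n))^3]$ and use that the central moments of a sum of independent Bernoullis are $O(H_n)$; this gives $H_n^3 + O(H_n^2)$ without any tail-splitting.
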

\begin{proof}
The size of a single permutation $g \in S_n$ is $O(n)$ and additionally 
$\Pi_n = |S_n| = n!$. Suppose that $g$ has $k$ cycles (say, $g = c_1\cdots 
c_k$). Then, 
Algorithm \ref{algo:main}
produces a system of equations $\bigwedge_{i=1}^k x \equiv a_i \pmod{b_i}$,
where $a_i, b_i \in [0,|c_i|)$. By Lemma \ref{lm:superlinear}, Algorithm
\ref{algo:sys-eqs} takes $O(k^3 \log^2 n)$ time in the bit complexity model,
since 
$N := \max\{ b_i : i \in [1,m]\} \leq n$. 
In addition, the number of permutations in $S_n$ with
$k$ cycles is precisely the definition of the \emph{unsigned Stirling number of 
the first kind} $\StirlingFirst{n}{k}$. Therefore, we have 
$\Sigma_n = O\left(\sum_{k=1}^n (k^3\log^2 n) \StirlingFirst{n}{k}\right) = 
O\left(\log^2 n \sum_{k=1}^n k^3 \StirlingFirst{n}{k}\right)$. Therefore, it 
suffices
to show that $\frac{1}{n!}\sum_{k=1}^n k^3 \StirlingFirst{n}{k} \sim c\log^3 n$
for a constant $c$. The proof can be found in the appendix. \qed
\end{proof}

Finally, we will now give our final main result of this section.
\begin{theorem}
There exists a linear-time algorithm in the bit complexity model for solving a 
system of linear congruence 
equations when the input numbers are represented in unary.
\label{th:unary}
\end{theorem}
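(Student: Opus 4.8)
The plan is to avoid the repeated Extended-Euclid calls that make Algorithm~\ref{algo:sys-eqs} cost $O(m^3\log^2 N)$ bit operations, and instead exploit the fact that, when the moduli $b_i$ are given in unary, the total input size is $\Theta\!\left(\sum_{i=1}^m b_i\right)$ rather than $\Theta\!\left(\sum_{i=1}^m \log b_i\right)$, which gives us a much larger budget to work with. The key structural observation is that the solvability of $\bigwedge_{i=1}^m x \equiv a_i \pmod{b_i}$ can be decided prime-by-prime via the Chinese Remainder Theorem (Proposition~\ref{prop:crt}): writing each $b_i = \prod_p p^{e_{i,p}}$, the system is solvable iff for every prime $p$ the finitely many congruences $x \equiv a_i \pmod{p^{e_{i,p}}}$ (ranging over those $i$ with $e_{i,p}>0$) are pairwise consistent, i.e. all of them agree modulo $p^{\min_i e_{i,p}}$ once one projects onto the relevant power; and when it is solvable the combined modulus is $\operatorname{lcm}(b_1,\dots,b_m) = \prod_p p^{\max_i e_{i,p}}$, whose unary representation has size at most $\sum_i b_i$, so the output also fits in the linear budget.

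First I would factor each $b_i$. Since $b_i \le N$ and $b_i$ is presented in unary, trial division up to $\sqrt{b_i}$ costs $O(b_i^{3/2}\operatorname{polylog})$ bit operations, which is within $O(b_i^2)$ and hence within the input size squared; but to stay strictly linear one can instead precompute, via a sieve of Eratosthenes over $[2,N]$ in $O(N\log\log N)$ time (linear in $N$, hence in the input), a smallest-prime-factor table, and then peel off the factorization of each $b_i$ in $O(\log b_i)$ steps — total $O(N + \sum_i \log b_i)$, which is linear. Next, for each prime $p \le N$ I would gather the list of pairs $(a_i \bmod p^{e_{i,p}},\, e_{i,p})$ over all $i$ with $p \mid b_i$, sort them by exponent (bucket sort, since exponents are $O(\log N)$), and sweep from the smallest exponent upward maintaining a single "current" residue $r$ modulo the smallest exponent seen so far: at each new congruence with exponent $e$, check that the stored $r$ and the new $a_i$ agree modulo $p^{\min(e,\text{current})}$ (a comparison on $O(\log N)$-bit numbers), and if the new exponent exceeds the current one, refine $r$ by CRT-lifting using the new congruence. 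The total work for prime $p$ is proportional to (number of $i$ divisible by $p$) times $\operatorname{polylog}(N)$, and since each $b_i \le N$ contributes at most $\log_2 b_i \le \log_2 N$ such incidences, $\sum_p \#\{i : p \mid b_i\} = O(m\log N)$, so the whole sweep costs $O(m\log N\cdot \operatorname{polylog}(N))$ bit operations; comparing this against the input size $\sum_i b_i \ge \sum_i p_{\min}(b_i) \ge 2m$ is not yet obviously linear, so the accounting must be done more carefully.

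The main obstacle I anticipate is precisely this final accounting: one must argue that every arithmetic step the algorithm performs can be charged against a distinct chunk of the \emph{unary} input, so that the $\operatorname{polylog}$ factors are genuinely absorbed. The clean way to do this is to observe that each of the $m$ equations $x \equiv a_i \pmod{b_i}$ by itself occupies $\Theta(b_i)$ bits of the input (since $b_i$, and also $a_i < b_i$, are in unary), and the algorithm touches equation $i$ only $O(\log b_i)$ times (once per prime power in $b_i$'s factorization), doing $O(\log b_i)$ bit operations each time — total $O(\log^2 b_i) = O(b_i)$ charged to that equation. Summing, the CRT sweep is $O\!\left(\sum_i b_i\right)$. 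The sieve is $O(N) = O(\max_i b_i) = O\!\left(\sum_i b_i\right)$. Constructing the output $\operatorname{lcm}(b_i)$ in unary: its value is $\prod_p p^{\max_i e_{i,p}} \le \prod_i b_i$ in general, which is \emph{not} linear — so the theorem statement must be read as "decide solvability and, if solvable, output the combined modulus in binary" (size $O(\sum_i \log b_i)$, trivially linear in the unary input) or alternatively one restricts attention to outputting a representation of the solution set whose size is genuinely bounded; I would adopt the binary-output reading, which is consistent with how Algorithm~\ref{algo:sys-eqs} is used downstream (only a single arithmetic progression $a+b\Z$ is needed, and $b$ in binary suffices). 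With that reading the total is $O\!\left(\sum_i b_i\right)$ bit operations, i.e. linear in the (unary) input size, which proves the theorem.
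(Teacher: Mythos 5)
Your overall strategy is the same as the paper's: reduce each $b_i$ to its prime-power factorization, use the CRT to decompose each congruence into a conjunction of prime-power congruences, detect conflicts prime-by-prime, and then charge the (polylogarithmic) work per equation against the $\Theta(b_i)$ unary bits that equation occupies. Your sort-by-exponent sweep is a clean alternative to the paper's two-dimensional lookup table $B[p][e]$; both devices serve exactly the same purpose, namely avoiding a modular reduction of a potentially much larger stored residue against a smaller new modulus.

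The one genuine gap is the factoring step. A sieve of Eratosthenes over $[2,N]$ is \emph{not} linear in the bit complexity model: the marking pass requires $\Theta(N\log\log N)$ arithmetic steps, each touching an $O(\log N)$-bit index, so it costs $\Theta(N\log N\log\log N)$ (or at best $\Theta(N\log N)$ even with incremental addressing) bit operations. When a single $b_i$ dominates the input (say $b_1 = n$), $N = \Theta(n)$ and the sieve alone is $\Omega(n\log n)$, defeating the linear bound. Ironically, the detour was unnecessary: your estimate $O(b_i^{3/2}\,\mathrm{polylog})$ for trial division is too pessimistic. Trial division up to $\sqrt{b_i}$ performs $O(\sqrt{b_i})$ divisions of $O(\log b_i)$-bit numbers, i.e., $O(\sqrt{b_i}\log^2 b_i)$ bit operations, which is already $o(b_i)$ and sums to $o\!\left(\sum_i b_i\right)$ — linear. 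The paper takes the same route with Strassen's deterministic factorisation, quoted at $O(\sqrt{N_i}\log^2 N_i)$ per number under schoolbook multiplication, for exactly this reason. A secondary slip: the per-visit cost in your sweep is $O(\log^2 b_i)$ (for a modular reduction), not $O(\log b_i)$, giving $O(\log^3 b_i)$ per equation rather than $O(\log^2 b_i)$; but both are $O(b_i)$, so the conclusion stands. Finally, your worry about outputting the lcm is moot: the theorem (and the paper's proof) only ask to decide solvability, not to output the combined progression.
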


\noindent
We now provide an algorithm that witnesses the above theorem. 
Let $\bigwedge_{i=1}^m x \equiv a_i \pmod{b_i}$ be the given system of 
equations. With unary representation of numbers, the size $N_i$ of the
equation $x \equiv a_i \pmod{b_i}$ is $a_i + b_i$. We use $n$ to denote the
total number of bits in the system of equations. Initially, we compute 
a binary representation of all the numbers $a_i$'s, $b_i$'s, and $n$ as in the 
proof of Lemma \ref{lm:bit:linear-main}, which takes linear time.
Next we factorise all the numbers $b_i$ into a product of distinct prime 
powers $p_{j_{i1}}^{e_{i1}}\cdots p_{j_{it_i}}^{e_{it_i}}$, where $p_j$
stands for the $j$th prime and all $e_{ij}$'s are positive integers.
%
This can be done in time $O(\sqrt{N_i} \log^2 N_i)$.
To obtain this time bound, we can use any
\emph{unconditional}\footnote{This means that the bound does not depend on
any number-theoretic assumptions.} deterministic
factorisation methods like
Strassen's algorithm, whose complexity was shown in \cite{BGS07} 
(cf. also see \cite{CH12}) to be $O(f(N^{1/4}\log N))$ for factoring a number
$N$, where $f(M)$ is the number of bit operations required
to multiply two numbers with $M$ bits. The standard (high-school) multiplication
algorithm runs in quadratic time giving us $f(M) = O(M^2)$, which
suffices for our purposes. This shows that Strassen's algorithm runs in time 
$O(N^{1/2}\log^2 N)$. 
[In practice,
do factoring using the general number field sieve 
(cf. \cite{Cormen}),
which performs extremely well in practice, though its complexity
requires some unproven number-theoretic assumptions.]

Next, following Chinese Remainder Theorem (CRT), compute $z_{ij} := a_i 
\mod p_{ij}^{e_{ij}}$ for 
each 
$j \in [1,t_i]$. Let us analyse the time complexity for performing this.
Each $z_{ij}$ can be computed by a standard algorithm (e.g. see \cite{Cormen})
in time quadratic in the number
of bits used to represent $a_i$ and $p_{ij}^{e_{ij}}$. Since each of these 
numbers use at most $\log N_i$ bits, each $z_i$ can be computed in 
time $O(\log^2 N_i)$, which is $o(N_i)$. In 
addition, since $e_{ij} > 1$ for each $j \in [1,t_i]$, it follows that 
$t_i = O(\log N_i)$. This means that the total time it takes to compute 
$\{ z_{ij} : j\in [1,t_i]\}$ is $O(\log^3 N_i)$, which is also $o(N_i)$. So, 
computing this for all $i\in[1,m]$ takes time $O(\sum_{i=1}^m \log^3 N_i)$, 
which is at most linear in the input size.

In summary, for each $i\in[1,m]$, we obtained the following system of equations,
which is equivalent to $x \equiv a_i\pmod{b_i}$ by CRT:
\begin{equation}
    x \equiv z_{i1} \pmod{p_{i1}^{e_{i1}}} \quad \wedge \quad \cdots\cdots
    \quad
    \wedge
    \quad
    x \equiv z_{it_i} \pmod{p_{it_i}^{e_{it_i}}} \tag{$E_i$}
\end{equation}
The final step is to determine if there exists a number $x \in \N$ that 
satisfies \emph{each} ($E_i$), for all $i\in[1,m]$. Loosely, we will go through
all the equations and make sure that there is no conflict between any
two equations whose periods are powers of the same prime number, i.e., $x 
\equiv a \pmod{b}$ and $x \equiv a'\pmod{b'}$ such that $b = p^i$ and
$b' = p^{i'}$ for some prime $p$ and $i,i' \in \Z_{>0}$. In order to achieve
this in linear-time in the bit complexity model, one has to store these 
equations in the memory (in the form of lookup tables) and carefully perform
the lookup operations while looking for a conflict. To this end, we first 
compute
$p_{\max} = \max\{p_{ij} : i \in [1,m], j \in [1,t_i]\}$ and
$e_{\max} = \max\{e_{ij} : i \in [1,m], j \in [1,t_j]\}$.
\begin{lemma}
$p_{\max}$ and $e_{\max}$ can be computed using $O(n)$ many bit operations.
\label{lm:pmax}
\end{lemma}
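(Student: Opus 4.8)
The plan is to compute both $p_{\max}$ and $e_{\max}$ in a single left-to-right pass over the prime-power factorisations $p_{ij}^{e_{ij}}$ produced in the previous step, maintaining a running maximum prime and a running maximum exponent, and to bound the cost of the comparisons carefully so that the total stays $O(n)$. The only numbers around are those appearing in these factorisations, so a scan is the natural approach; the whole content of the lemma is that the scan can be made linear rather than, say, $O(n\log n)$.

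First I would record the size bounds forced by the unary encoding. Since $b_i \le N_i \le n$, every prime $p_{ij}$ satisfies $p_{ij} \le n$ (so $p_{\max}$ uses at most $\lfloor\log n\rfloor+1$ bits) and every exponent satisfies $e_{ij} \le \log_2 b_i \le \log_2 n$ (so $e_{\max}$ uses at most $\lceil\log\log n\rceil$ bits). The crucial observation is that the primes dividing a fixed $b_i$ multiply to at most $b_i$: from $\prod_{j=1}^{t_i} p_{ij} \le \prod_{j=1}^{t_i} p_{ij}^{e_{ij}} = b_i$ together with $\lfloor\log_2 p\rfloor + 1 \le 2\log_2 p$ for $p\ge 2$, one gets $\sum_{j=1}^{t_i} |p_{ij}| \le 2\log_2 b_i = O(\log N_i)$; and from $2^{\sum_j e_{ij}} \le \prod_j p_{ij}^{e_{ij}} = b_i$ one gets $\sum_{j=1}^{t_i} |e_{ij}| \le \sum_{j=1}^{t_i} e_{ij} \le \log_2 b_i = O(\log N_i)$. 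Summed over all equations, the entire list of factorisations therefore has total bit-length $O(\sum_i \log N_i) = O(\sum_i N_i) = O(n)$.

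The algorithm then initialises $p := 0$, $e := 0$, scans every pair $(p_{ij}, e_{ij})$, and replaces $p$ by $p_{ij}$ whenever $p_{ij} > p$ and $e$ by $e_{ij}$ whenever $e_{ij} > e$. The one delicate point — and the place I expect the actual work to be — is the cost of a comparison: done naively, comparing $p_{ij}$ against the running maximum costs $\Theta(\log n)$ bit operations, and since there may be $\Theta(n)$ pairs this would only yield $O(n\log n)$. This is fixed by keeping the running maximum in normalised form (no leading zeros) together with its bit-length stored in unary, so that the comparison of $p_{ij}$ with the current maximum and any resulting update can be carried out in $O(|p_{ij}|+1)$ bit operations: read the $|p_{ij}|$ bits of $p_{ij}$ while walking in parallel along the unary length counter of the maximum, stopping as soon as one of the two is exhausted. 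Treating $e_{\max}$ the same way, the whole pass costs $O\big(\sum_i\sum_{j=1}^{t_i}(|p_{ij}|+|e_{ij}|+1)\big) = O(\sum_i \log N_i) = O(n)$ bit operations, which proves the lemma. The obstacle really is just this bookkeeping: one needs both that $\sum_j |p_{ij}| = O(\log b_i)$ — which rests on $\prod_j p_{ij}$ dividing $b_i$ — and a comparison routine whose running time is proportional to the length of the shorter operand; dropping either ingredient leaves the bound at $O(n\log n)$.
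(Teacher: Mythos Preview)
Your proposal is correct and takes essentially the same approach as the paper: a single pass over the $p_{ij}$'s and $e_{ij}$'s maintaining a running maximum, with the key observation that comparisons must be arranged so that only $O(|p_{ij}|)$ bits of the current maximum are read (the paper achieves this by storing numbers without leading zeros, you by additionally tracking the bit-length in unary). Your write-up is in fact more careful than the paper's on one point: you explicitly verify that $\sum_{j}|p_{ij}| = O(\log N_i)$ via $\prod_j p_{ij}\mid b_i$, whereas the paper leaves the linearity of $\sum_{i,j}\log p_{ij}$ implicit, relying on the cruder bound $t_i = O(\log N_i)$ established just before the lemma.
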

\begin{proof}
The algorithm for computing $p_{\max}$ and $e_{\max}$ is a slight modification 
of the standard algorithm that computes the maximum number in a list, which
sequentially goes through the list $n_1,\ldots,n_m$ while keeping the maximum 
number $n_{\max}$ 
in the sublist explored so far. To ensure linear-time complexity, we have to 
make sure that when comparing the values of $n_i$ and $n_{\max}$, we explore
at most $n_i$ bits of $n_{\max}$ (since $n_{\max}$ is possibly much larger than
$n_i$). 
%
This is easily achievable by assuming 
binary representation of these numbers \emph{without redundant leading 0s},
e.g., the number 5 will be represented as 101, not 0101 or 00000101. That 
way, we will only need to inspect $\log(n_i)$ bits from $n_{\max}$ on the
$i$th iteration, which will give a total running time of $O(\sum_{i=1}^m
\log(n_i))$, which is linear in input size. \qed
\end{proof}

Next, keep one 1-dimensional array $A$ and one 
2-dimensional array $B$:
\[
A[1,\ldots,p_{\max}] \qquad \qquad B[1,\ldots,p_{\max}][1,\ldots,e_{\max}].
\]
$A[k]$ and $B[k][e]$ will not be defined when $k$ is not a prime number.
We will use $A[k]$ as a flag indicating whether some equation of the form
$x \equiv z \pmod{k^e}$ has been visited, in which case $A[k]$ will contain
$(z,e)$. In this case, we will use $B[k][e']$ (with $e' \leq e$) to store
the value of $z \mod k^{e'}$.

We now elaborate how $A$ and $B$ are used when iterating over the equations
in the system. Sequentially go through 
each system ($E_i$) of equations. For each $i\in[1,m]$,  sequentially go 
through each equation $x \equiv z_{ij} \pmod{p_{ij}^{e_{ij}}}$, for each
$j \in [1,t_i]$, and check if $A[p_{ij}]$ is defined. If it is not defined,
set $A[p_{ij}] := (z_{ij},e_{ij})$ and compute $B[p_{ij}][l] = z_{ij} \mod
p^l$ for each $l \in [1,e_{ij}]$. If it is defined (say,
$A[p_{ij}] = (z,e)$), then we analyse the constraints $x \equiv z 
\pmod{p_{ij}^e}$ and $x \equiv z_{ij}\pmod{p_{ij}^{e_{ij}}}$ simultaneously.
We compare $e$ and $e_{ij}$ resulting in three cases:
\begin{description}
\item[Case 1.] $e = e_{ij}$. In this case, make sure that 
$z = z_{ij}$ otherwise the two equations (and, hence, the entire system) cannot 
be satisfied simultaneously.
\item[Case 2.] $e < e_{ij}$. In this case, make sure that $z_{ij} \equiv z 
\pmod{p_{ij}^e}$ (otherwise, unsatisfiable) and assign $A[p_{ij}] := 
(z_{ij},e_{ij})$. For each $l \in [1,e_{ij}]$, update $B[p_{ij}][l] := z_{ij}
\mod p_{ij}^l$.
\item[Case 3.] $e > e_{ij}$. In this case, make sure that $z_{ij} \equiv z
\pmod{p_{ij}^{e_{ij}}}$ (otherwise, unsatisfiable).
\end{description}
We now analyse the running time of this final step (i.e. when scanning through
the subsystem ($E_i$)). To this end, we measure the time it 
takes to process each
equation $x \equiv z_{ij} \pmod{p_{ij}^{e_{ij}}}$. There are two cases, which
we will analyse in turn. 
\smallskip

\noindent
\textbf{(Case I):} when $A[p_{ij}]$ is not defined. In this case, 
setting 
$A[p_{ij}]$ takes constant time, while setting $B[p_{ij}][l]$ for all
$l \in [1,e_{ij}]$ takes $O(e_{ij} \times (\log z_{ij} + 
\log p_{ij}^{e_{ij}})^2)$ since computing $a \mod b$ can be done in time 
quadratic in $\log(a)+\log(b)$. Since $e_{ij} \leq \log N_i$ and
$z_{ij},p_{ij} \leq N_i$, this
expression can be simplified to $O(\log N_i \times \log^2(z_{ij}N_ip_{ij}))
= O(\log^3 N_i)$.
\smallskip

\noindent
\textbf{(Case II): } when $A[p_{ij}]$ is already defined, e.g.,
$A[p_{ij}] = (z,e)$. In this case, we will compare the values of $e$ and 
$e_{ij}$. To ensure linear-time complexity, we will make sure that at most 
$\log(e_{ij})$ bits from $e$ are read by using the trick from the proof of Lemma
\ref{lm:pmax}. For Case 1, we will need extra $O(\log z_{ij}) = O(\log N_i)$ 
time steps.
For Case 2, we have $0 \leq z \leq p^{e_{ij}}$ and computing
$z_{ij} \mod p_{ij}^e$ can be done in time $O(\log^2 N_i)$ as before. Updating
$B[p_{ij}][l]$ for all $l \in [1,e_{ij}]$ takes $O(\log^3 N_i)$ as in the
previous paragraph. For Case 3, since $e > e_{ij}$, we may access the value of 
$z \mod p_{ij}^{e_{ij}}$ from $B[p_{ij}][e_{ij}]$ in constant time and
compare this with the value of $z_{ij}$. Since $z \in [0,p_{ij}^{e_{ij}})$,
this takes time $O(\log N_i)$.

In summary, either case takes time at most $O(\log^3 N_i)$. Therefore, 
accounting for the entire subsystem ($E_i$), the algorithm incurs
$O(\sum_{j=1}^{t_i} \log^3 N_i) = O(\log^4 N_i)$ time steps. Hence, accounting
for \emph{all} of the subsystems $E_i$ ($i\in[1,m]$) the
algorithm takes time $O(\sum_{i=1}^m \log^4 N_i)$, which is linear in the
size of the input. This completes the proof of Theorem \ref{th:unary}.
\begin{remark}
The purpose of the 2-dimensional array $B$ above is to avoid superlinear time
complexity for Case 3. We can imagine a system of linear equations 
$\bigwedge_{i=1}^m x \equiv a_i \pmod{b_i}$, where $a_1$ and $b_1$ are 
substantially larger than the other $a_i$'s and $b_i$'s ($i \in [2,m]$). In 
this case, without
the lookup table $B$, checking whether $a_i \equiv a_1 \pmod{b_i}$ in Case 3
will require the algorithm to inspect the entire value of $a_1$, which prevents
us from bounding the time complexity in terms of $a_i$ and will yield a 
superlinear time complexity for our algorithm.
\end{remark}

\OMIT{
handled in the same way. In summary, if this procedure successfully reaches the 
end of the iteration, we obtain an array $A$ containing a system of equations
\begin{equation*}
    x \equiv a_1 \pmod{p_1^{e_1}} \qquad \cdots\cdots\cdots\cdots \qquad
    x \equiv a_s \pmod{p_s^{e_s}}
\end{equation*}
such that $p_i$ and $p_j$ ($1 \leq i \neq j \leq s$) are distinct prime
numbers. This allows us to apply CRT and deduce that the system of equations
is soluble. 
To ensure linear-time complexity, we have to make sure that
when exploring the constraint $x \equiv z_{ij} \pmod{p_{ij}^{e_{ij}}}$
and if $A[p_{ij}] = (z,e)$ is defined (which can be accessed in constant time in
RAM model), the algorithm does not explore the entire content of $z$ and
$e$ (since their size might be bigger than $z_{ij}$ and $e_{ij}$). To achieve
this, we will represent these numbers \emph{without redundant leading 0s},
e.g., the number 5 will be represented as 101, not 0101 or 00000101. That 
way, we will only need to inspect $\log(z_{ij})$ (resp. $\log(e_{ij})$) bits
from $z$ and $e$.
}

\section{Future work}
\label{sec:conc}
Since an algorithm for the orbit problem will be invoked many times during an 
explicit-state model checking (in the worst case once each time a new state in 
the transition system is visited; cf. \cite{WD10}), we believe that it is 
important to further identify efficiently solvable (preferably, in linear-time) 
subcases of the orbit problem. As mentioned in the Introduction, there are known
classes of permutations groups whose orbit problem is polynomial-time solvable
(e.g. $\Gamma_d$ which contains solvable groups). We propose the question of
further identifying other classes of permutations groups whose orbit problem is
solvable in linear time. 
%
\OMIT{
The second problem concerns the constructive orbit problem over cyclic groups.
Due to the lack of a target configuration $\vecW \in \Gamma^n$, our technique 
does not seem to apply directly in this case. 
In particular, we cannot simply 
use $\vecW \in \Gamma^n$ that is derived from the input configuration
$\vecV \in \Gamma^n$ by separately finding the lexicographically minimum parts 
for each cycle in the given permutation, since this might render the system of 
equations insoluble. 
}

\OMIT{
\section{Section title}
\label{sec:1}
Text with citations \cite{RefB} and \cite{RefJ}.
\subsection{Subsection title}
\label{sec:2}
as required. Don't forget to give each section
and subsection a unique label (see Sect.~\ref{sec:1}).
\paragraph{Paragraph headings} Use paragraph headings as needed.
\begin{equation}
a^2+b^2=c^2
\end{equation}

\begin{figure}
  \includegraphics{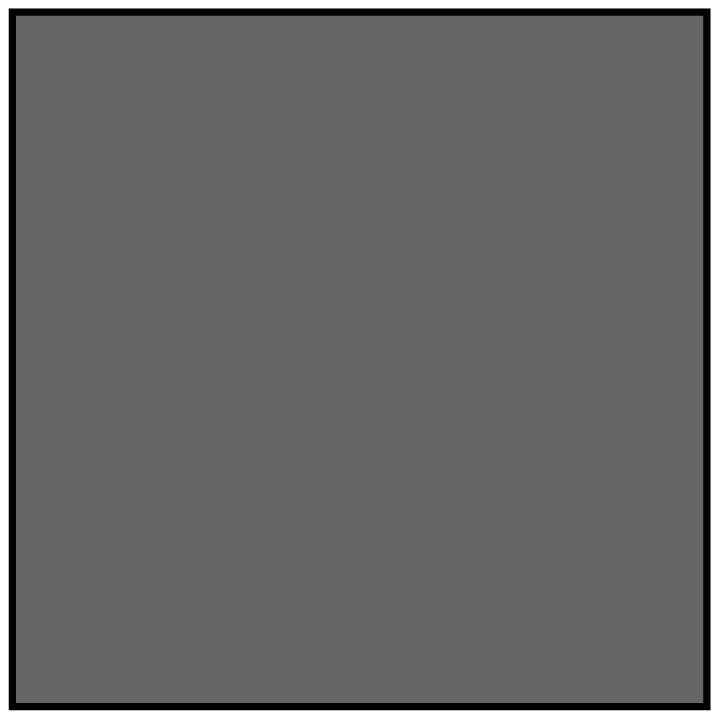}
\caption{Please write your figure caption here}
\label{fig:1}       
\end{figure}
%
\begin{figure*}
  \includegraphics[width=0.75\textwidth]{example.eps}
\caption{Please write your figure caption here}
\label{fig:2}       
\end{figure*}
%
\begin{table}
\caption{Please write your table caption here}
\label{tab:1}       
\begin{tabular}{lll}
\hline\noalign{\smallskip}
first & second & third  \\
\noalign{\smallskip}\hline\noalign{\smallskip}
number & number & number \\
number & number & number \\
\noalign{\smallskip}\hline
\end{tabular}
\end{table}
}


\bibliographystyle{spmpsci}      

\bibliography{references}

\appendix

\section{Completing proof of Theorem \ref{th:average}} 

Let $\StirlingFirst{n}{k}$ denote the unsigned Stirling number of
the first kind, and $\binom{n}{k}$ denote $n$ choose $k$. The harmonic number 
$H_n$ is defined as
$$
H_n = \sum_{k=1}^n \frac{1}{k}.
$$
In general, for an integer $s \ge 1$, the generalized harmonic number of order $s$ is defined as
$$
H_n^{(s)} = \sum_{k=1}^n \frac{1}{k^s}.
$$
It is known that 
$$
\frac{1}{n!} \sum_{k=1}^n k \StirlingFirst{n}{k} = H_n.
$$

Define 
$$
f(n) = \frac{1}{n!} \sum_{k=1}^n k^2 \StirlingFirst{n}{k}, \quad 
g(n) = \frac{1}{n!} \sum_{k=1}^n k^3 \StirlingFirst{n}{k}.
$$
It is known that
$$
\sum_{k=m}^{n} \StirlingFirst{n}{k} \binom{k}{m} = \StirlingFirst{n+1}{m+1},
$$
(see \cite{stirling}).
In particular, we have
\begin{eqnarray*}
\StirlingFirst{n+1}{3} & = & \sum_{k=2}^{n} \StirlingFirst{n}{k} \binom{k}{2} = \sum_{k=1}^{n} \StirlingFirst{n}{k} \frac{k(k-1)}{2} = \frac{1}{2} n! (f(n) - H_n), \\
\StirlingFirst{n+1}{4} & = & \sum_{k=3}^{n} \StirlingFirst{n}{k} \binom{k}{3} = \sum_{k=1}^{n} \StirlingFirst{n}{k} \frac{k(k-1)(k-2)}{6} = \frac{1}{6} n! (g(n) - 3f(n) + 2H_n). 
\end{eqnarray*}
That is, we have
$$
f(n) = \frac{2}{n!} \StirlingFirst{n+1}{3} + H_n, \text{ and}
$$
$$
g(n) = \frac{6}{n!} \StirlingFirst{n+1}{4} + 3f(n) - 2H_n = \frac{6}{n!} \StirlingFirst{n+1}{4} + \frac{6}{n!} \StirlingFirst{n+1}{3} + H_n. 
$$
It is known (cf. page 217 of \cite{comtet74}) 
that
$$
\frac{1}{n!} \StirlingFirst{n+1}{3} = \frac{1}{2} (H_n^2 - H_n^{(2)})
$$
$$
\frac{1}{n!} \StirlingFirst{n+1}{4} = \frac{1}{6} (H_n^3 - 3 H_n H_n^{(2)} + 2 H_n^{(3)}). 
$$
\OMIT{
(I found this on \url{http://en.wikipedia.org/wiki/Stirling_numbers_of_the_first_kind#Relation_to_harmonic_numbers}, but this should be double checked by using a more reliable publication.)
}
It is also known that $H_n = \gamma + \ln n$, 
$\lim_{n \rightarrow \infty} H_n^{(2)} = \zeta(2) = \frac{\pi^2}{6}$ and $\lim_{n \rightarrow \infty} H_n^{(3)} = \zeta(3) \approx 1.202$, where $\gamma \approx 0.577$ is Euler's constant and $\zeta(s) = \sum_{k=1}^{\infty} \frac{1}{k^s}$ is the Riemann zeta function. Hence we obtain
$$
f(n) = H_n^2 - H_n^{(n)} + H_n \sim \ln^2 n.
$$
Putting all together, we obtain
$$
g(n) = \frac{6}{n!} \StirlingFirst{n+1}{4} + 3f(n) - 2H_n \sim \ln^3 n.
$$
\OMIT{
(A precise formula for $g(n)$ can be obtained but is unnecessary. In the same fashion one can show that $f(n) \sim \log^2 n$.)
}


\end{document}